\newcommand{\mbZ}{\mathbb Z}
\newcommand{\mbC}{\mathbb C}
\newcommand{\oM}{\overline{\mathcal M}}
\newcommand{\tg}{\widetilde g}
\newcommand{\tu}{{\widetilde u}}
\newcommand{\og}{\overline g}
\newcommand{\oh}{\overline h}
\newcommand{\hLambda}{\widehat\Lambda}
\def\oM{{\overline{\mathcal{M}}}}
\def\CP{{{\mathbb C}{\mathbb P}}}
\renewcommand{\Im}{\mathrm{Im}}
\def\d{{\partial}}
\newcommand{\<}{\left<}
\renewcommand{\>}{\right>}
\newcommand{\eps}{\varepsilon}
\newcommand{\str}{\mathrm{str}}
\newcommand{\cA}{\mathcal A}
\newcommand{\hcA}{\widehat{\mathcal A}}
\newcommand{\DR}{\mathrm{DR}}
\newcommand{\DZ}{\mathrm{DZ}}
\newcommand{\cF}{\mathcal F}
\newcommand{\Coef}{\mathrm{Coef}}
\DeclareMathOperator{\Deg}{Deg}
\newcommand{\tv}{\widetilde v}
\renewcommand{\top}{\mathrm{top}}
\newcommand{\gl}{\mathrm{gl}}
\newcommand{\sing}{\mathrm{sing}}
\newcommand{\hOmega}{\widehat{\Omega}}
\newcommand{\hE}{\widehat{E}}
\newcommand{\diag}{\mathrm{diag}}
\newcommand{\un}{{1\!\! 1}}
\newcommand{\mcF}{\mathcal{F}}
\newcommand{\of}{\overline{f}}
\newcommand{\tK}{\widetilde{K}}
\newcommand{\rspin}{{\text{$r$-spin}}}
\newcommand{\ov}{\overline{v}}
\newcommand{\wk}{\mathrm{wk}}
\newcommand{\bfu}{\mathbf{u}}
\newcommand{\Ker}{\mathrm{Ker}}
\newcommand{\e}{\mathrm{e}}
\newcommand{\rt}{\mathrm{rt}}
\newcommand{\wkt}{{\mathrm{wk},\mathrm{t}}}
\newcommand{\pol}{\mathrm{pol}}
\newcommand{\rtt}{{\mathrm{rt},\mathrm{t}}}
\newcommand{\lb}{\left(}
\newcommand{\rb}{\right)}
\newcommand{\cD}{\mathcal{D}}
\newcommand{\cS}{\mathcal{S}}
\newcommand{\ovr}{\overline{r}}
\newcommand{\tb}{\widetilde{b}}
\newcommand{\otau}{\overline{\tau}}
\newtheorem{theorem}{Theorem}[section]
\newtheorem{proposition}[theorem]{Proposition}
\newtheorem{lemma}[theorem]{Lemma}
\newtheorem{conjecture}[theorem]{Conjecture}
\theoremstyle{remark}
\newtheorem{remark}[theorem]{Remark}
\theoremstyle{definition}
\newtheorem{definition}[theorem]{Definition}
\numberwithin{equation}{section}
\begin{document}

\title[Bihamiltonian structure of the DR hierarchy in the semisimple case]{Bihamiltonian structure of the DR hierarchy in the semisimple case}

\author{Alexandr Buryak}
\address{A. Buryak:\newline 
Faculty of Mathematics, National Research University Higher School of Economics, Usacheva str. 6, Moscow, 119048, Russian Federation;\smallskip\newline 
Skolkovo Institute of Science and Technology, Bolshoy Boulevard 30, bld.~1, Moscow, 121205, Russian Federation;\smallskip\newline
Sobolev Institute of Mathematics of the Siberian Branch of the Russian Academy of Sciences, Koptyug av.~4, Novosibirsk, 630090, Russian Federation}
\email{aburyak@hse.ru}

\author{Paolo Rossi}
\address{P. Rossi:\newline Dipartimento di Matematica ``Tullio Levi-Civita'', Universit\`a degli Studi di Padova,\newline
Via Trieste 63, 35121 Padova, Italy}
\email{paolo.rossi@math.unipd.it}

\begin{abstract}
Of the two approaches to integrable systems associated to semisimple cohomological field theories (CohFTs), the one suggested by Dubrovin and Zhang and the more recent one using the geometry of the double ramification (DR) cycle, the second has the advantage of being very explicit. The Poisson operator of the DR hierarchy is $\eta^{-1} \d_x$, where~$\eta$ is the metric of the CohFT, and the Hamiltonians are explicitly defined as generating functions of intersection numbers of the CohFT with the DR cycle, the top Hodge class $\lambda_g$, and powers of a psi-class. The question whether the DR hierarchy is endowed with a bihamiltonian structure appeared to be much harder. In our previous work in collaboration with S. Shadrin, when the CohFT is homogeneous, we proposed an explicit formula for a differential operator and conjectured that it would provide the required bihamiltonian structure. In this paper, we prove this conjecture. Our proof is based on two recently proved results: the equivalence of the DR hierarchy and the Dubrovin-Zhang hierarchy of a semisimple CohFT under Miura transformation and the polynomiality of the second Poisson bracket of the DZ hierarchy of a homogeneous semisimple CohFT. In particular, our second Poisson bracket coincides through the DR/DZ equivalence with the second Poisson bracket of the DZ hierarchy, hence providing a remarkably explicit approach to their bihamiltonian structure.
\end{abstract}

\date{\today}

\maketitle

\section{Introduction}

In \cite{DZ01} Dubrovin and Zhang present a construction associating an integrable hierarchy of Hamiltonian evolutionary PDEs in one space variable to a semisimple cohomological field theory (CohFT) on the moduli spaces of stable curves. This integrable system is called the Dubrovin-Zhang (DZ) hierarchy, it is tau-symmetric and (the logarithm of) the tau-function of one of its solutions coincides with the potential of the CohFT (the generating function of the intersection numbers of the CohFT with monomials in the psi-classes). One key fact is that the DZ hierarchy is, in an appropriate sense, nonsingular, meaning that Hamiltonians, Poisson structure and, consequently, the equations themselves are essentially polynomial in the spatial derivatives of the dependent variables. This polynomiality is not trivial a priori and was proved in \cite{BPS12}.\\

As explained in \cite{BPS12} the hierarchy can be defined starting from the genus $0$ part of the CohFT (as the principal hierarchy of the corresponding Dubrovin--Frobenius manifold). The principal hierarchy is the dispersionless part of the DZ hierarchy. The full DZ hierarchy is defined as the dispersionless part expressed in new coordinates via a coordinate transformation determined (in terms of the tau function) by higher genus geometric data. This coordinate transformation, however, is \emph{singular} in the spatial derivatives. Polynomiality for the full hierarchy, for the Hamiltonians and for its Poisson structure then corresponds to a series of highly nontrivial cancellations of the singular terms originating from applying the singular coordinate change to their regular dispersionless limit.\\

When the semisimple CohFT is homogeneous, Dubrovin defined in \cite{Dub96} a bihamiltonian structure for the principal hierarchy of the corresponding homogeneous Dubrovin--Frobenius manifold. Similar to what happens for the Hamiltonians and the first Poisson structure, polynomiality of the Poisson structure for the full DZ hierarchy obtained applying the singular coordinate change to Dubrovin's second Poisson bracket is a nontrivial fact which was only recently proved in \cite{LWZ21}.\\

In \cite{Bur15} the first named author introduced another construction of a Hamiltonian tau-symmetric integrable hierarchy, called double ramification (DR) hierarchy, from a CohFT, employing intersection theory of the CohFT with the DR cycle in the moduli space of stable curves. The construction is more general (it doesn't need semisimplicity and can be applied to partial CohFTs too), but in the semisimple case it was conjectured in \cite{Bur15,BDGR18} that the DR and DZ hierarchies coincide after a \emph{regular} (polynomial) coordinate transformation. This conjecture was transformed into a geometric statement about the validity of certain relations in the tautological ring of the moduli spaces of curves \cite{BDGR18,BDGR20,BGR19} and proved in \cite{BLS24}.\\

In \cite{BRS21}, in a collaboration with S. Shadrin, we conjectured a formula for a second Poisson structure for the DR hierarchy  of a homogeneous CohFT which would make it bihamiltonian.  This formula is an explicit and relatively simple closed expression in the generating function of intersection numbers of the DR cycle with the top Chern class of the Hodge bundle and the homogeneous CohFT. We could prove compatibility with the first Poisson bracket and gave some evidence of the bihamiltonian recursion for the hierarchy with respect to these two operators, but, in particular, we could not prove that our expression for the second bracket satisfies the Jacobi identity. Notice that, in light of the above mentioned equivalence between the DR and the DZ hierarchies, the regular coordinate transformation connecting the two hierarchies can be used to transport the bihamiltonian structure of the DZ hierarchy to the DR side, and our idea was that our second conjecturally Poisson operator would reproduce that bihamiltonian structure.\\

In this paper we prove this fact. The main idea is to study the singular coordinate transformation obtained by composing the singular coordinate transformation sending the dispersionless part of the DZ hierarchy (i.e. the principal hierarchy of the underlying Dubrovin--Frobenius manifold) to the full DZ hierarchy and the regular coordinate transformation connecting the DZ to the DR hierarchy. In Lemma \ref{lemma:singular-3} we prove that, in an appropriate sense, this composite transformation connecting the principal hierarchy to the DR hierarchy has no regular part (is purely singular in the language of this paper). In particular, when we apply it to Dubrovin's dispersionless second Poisson bracket of the principal hierarchy, this purely singular transformation produces mostly singular terms and only of handful of regular terms for the transformed bracket. At this point we make use of the above mentioned result of \cite{LWZ21} stating that the second Poisson bracket of the DZ hierarchy is polynomial, and so is its transformation through the (regular) DR/DZ equivalence proved in \cite{BLS24}. This means that all the above mentioned singular terms produced in the transformation have to vanish and only the handful of regular terms remain, which coincide with the formula we introduced in \cite{BRS21}.

\medskip

\subsection*{Notation and conventions}  

\begin{itemize}

\item We use the standard convention of sum over repeated Greek indices.

\smallskip

\item When it doesn't lead to a confusion, we use the symbol $*$ to indicate any value, in the appropriate range, of a sub- or superscript.

\smallskip

\item For a nonnegative integer $n$, let $[n]:=\{1,\dots,n\}$.

\smallskip

\item For a topological space $X$, we denote by $H^i(X)$ the cohomology groups with coefficients in $\mbC$. Let $H^{\e}(X):=\bigoplus_{i\ge 0}H^{2i}(X)$. By $\Deg\colon H^*(X)\to H^*(X)$ we denote the operator that acts on~$H^i(X)$ by the multiplication by $\frac{i}{2}$. 

\smallskip

\item We will work with the moduli spaces $\oM_{g,n}$ of stable algebraic curves of genus $g$ with $n$ marked points, which are nonempty only when the condition $2g-2+n>0$ is satisfied. We will often omit mentioning this condition explicitly, and silently assume that it is satisfied when a moduli space is considered. 

\end{itemize}

\medskip

\subsection*{Acknowledgements}

The work of A. B. is supported by the Mathematical Center in Akademgorodok under the agreement No. 075-15-2025-348 with the Ministry of Science and Higher Education of the Russian Federation. P.~R. is supported by the University of Padova and is affiliated to the INFN under the national project MMNLP and to the INdAM group GNSAGA.

\medskip


\section{DR hierarchy}

\begin{definition}[\cite{KM94}]
A \emph{cohomological field theory (CohFT)} is a family of linear maps 
$$
c_{g,n}\colon V^{\otimes n} \to H^\e(\oM_{g,n}),\quad g,n\ge 0,\quad 2g-2+n>0,
$$
where $V$ is an arbitrary finite dimensional vector space, together with a special element $e\in V$, called the \emph{unit}, and a symmetric nondegenerate bilinear form $\eta\colon V\times V\to\mbC$, called the \emph{metric}, such that the following axioms are satisfied:
\begin{itemize}
\item[(i)] The maps $c_{g,n}$ are equivariant with respect to the $S_n$-actions permuting the $n$ copies of~$V$ in $V^{\otimes n}$ and the $n$ marked points in $\oM_{g,n}$, respectively.

\smallskip

\item[(ii)] $\pi^*\left(c_{g,n}(\otimes_{i=1}^n v_i)\right) = c_{g,n+1}(\otimes_{i=1}^n  v_i\otimes e)$ for $v_1,\ldots,v_n\in V$, where $\pi\colon\oM_{g,n+1}\to\oM_{g,n}$ is the map that forgets the last marked point. Moreover, $c_{0,3}(v_1\otimes v_2 \otimes e) =\eta(v_1,v_2)$ for $v_1,v_2\in V$, where we use the identification $H^*(\oM_{0,3})=\mbC$ coming from the fact that~$\oM_{0,3}$ is a point.

\smallskip

\item[(iii)] Choosing a basis $e_1,\ldots,e_{\dim V}$ of $V$, we have
$$
\gl^*\left(c_{g_1+g_2,n_1+n_2}( \otimes_{i=1}^{n_1+n_2} e_{\alpha_i})\right) = \eta^{\mu \nu}c_{g_1,n_1+1}(\otimes_{i\in I} e_{\alpha_i} \otimes e_\mu)\otimes c_{g_2,n_2+1}(\otimes_{j\in J} e_{\alpha_j}\otimes e_\nu)
$$
for $1\leq\alpha_1,\ldots,\alpha_{n_1+n_2}\leq \dim V$, where $I \sqcup J =[n_1+n_2]$, $|I|=n_1$, $|J|=n_2$, $\gl\colon\oM_{g_1,n_1+1}\times\oM_{g_2,n_2+1}\to \oM_{g_1+g_2,n_1+n_2}$ is the corresponding gluing map, $\eta_{\alpha\beta}:=\eta(e_\alpha,e_\beta)$, we will also denote by $\eta$ the matrix $(\eta_{\alpha\beta})$, and~$\eta^{\alpha\beta}$ is defined by $(\eta^{\alpha\beta}):=\eta^{-1}$. Clearly, the axiom doesn't depend on the choice of a basis in $V$.

\smallskip

\item[(iv)] $\gl^*\left(c_{g+1,n}(\otimes_{i=1}^n e_{\alpha_i})\right) = c_{g,n+2}(\otimes_{i=1}^n e_{\alpha_i}\otimes e_{\mu}\otimes e_\nu) \eta^{\mu \nu}$ for $1 \leq\alpha_1,\ldots,\alpha_n\leq \dim V$, where  $\gl\colon\oM_{g,n+2}\to \oM_{g+1,n}$ is the gluing map that increases the genus by identifying the last two marked points.
\end{itemize}
\end{definition}

\medskip

\begin{definition} 
A CohFT $\{c_{g,n}\}$ is called \emph{homogeneous} if the vector space $V$ is graded, $V=\bigoplus_{q\in\mbC} V_q$ (only finitely many vector space $V_q$ are nonzero), with $\deg e=0$, and there exist a vector $\overline{r}\in V$ and a complex number $\delta$ such that for any homogeneous vectors $v_1,\ldots,v_n\in V$ the following condition is satisfied:
\begin{gather*}
\Deg c_{g,n}(\otimes_{i=1}^n v_i)+\pi_*\left(c_{g,n+1}(\otimes_{i=1}^n v_i\otimes \overline{r})\right)=\left(\sum_{i=1}^n \deg v_i+\delta(g-1)\right)c_{g,n}(\otimes_{i=1}^n v_i),
\end{gather*}
where $\pi\colon\oM_{g,n+1}\to\oM_{g,n}$ is the map that forgets the last marked point.
\end{definition}

\medskip

Let us briefly recall basic notions and notations in the formal theory of evolutionary PDEs with one spatial variable:
\begin{itemize}
\item We fix an integer $N\ge 1$ and consider formal variables $u^1,\ldots,u^N$. To the formal variables $u^\alpha$ we attach formal variables $u^\alpha_d$ with $d\ge 0$ and introduce the algebra of \emph{differential polynomials} $\cA_u:=\mbC[[u^*_0]][u^*_{\ge 1}]$. We identify $u^\alpha_0=u^\alpha$ and also denote $u^\alpha_x:=u^\alpha_1$, $u^{\alpha}_{xx}:=u^\alpha_2$, \ldots. 

\smallskip

\item Operators $\d_x,\frac{\delta}{\delta u^\alpha}\colon\cA_u\to\cA_u$ are defined by $\d_x:=\sum_{d\ge 0}u^\alpha_{d+1}\frac{\d}{\d u^\alpha_d}$ and $\frac{\delta}{\delta u^\alpha}:=\sum_{d\ge 0}(-\d_x)^d\circ\frac{\d}{\d u^\alpha_d}$, respectively.

\smallskip

\item The space $\Lambda_u:=\left.\cA_u\right/(\mbC\oplus\Im(\d_x))$ is called the space of \emph{local functionals}. The image of $f\in\cA_u$ under the canonical projection $\cA_u\to\Lambda_u$ is denoted by $\int f dx\in\Lambda_u$. Since $\mbC\oplus\Im(\d_x)\subset\Ker\left(\frac{\delta}{\delta u^\alpha}\right)$, there is a well-defined linear map $\frac{\delta}{\delta u^\alpha}\colon\Lambda_u\to\cA_u$.

\smallskip

\item Denote by $\cA_{u;d}\subset\cA_u$ and $\Lambda_{u;d}\subset\Lambda_u$ the homogeneous components of (differential) degree $d$, where $\deg u^\alpha_i:=i$.

\smallskip

\item The extended spaces of differential polynomials and local functionals are defined by $\hcA_u:= \cA_u[[\eps]]$ and $\hLambda_u:=\Lambda_u[[\eps]]$, respectively. Let $\hcA_{u;d}\subset\hcA_u$ and $\hLambda_{u;d}\subset\hLambda_u$ be the homogeneous components of degree~$d$, where $\deg\eps:=-1$.  

\smallskip

\item A \emph{differential operator} on $\hcA_u$ is an operator of the form $\sum_{i,j\ge 0}f^{[i]}_j\eps^i\d_x^j$, where $f^{[i]}_j\in\cA_u$ and for any fixed $i$ the sum $\sum_{j\ge 0}f^{[i]}_j\d_x^j$ has only finitely many nonzero terms. For any differential operator $K=\sum_{i\ge 0} f_i\d_x^i$, we define $K^\dagger:=\sum_{i\ge 0}(-\d_x)^i\circ f_i$.

\smallskip

\item A \emph{matrix differential operator} is an $N\times N$ matrix~$K=(K^{\mu\nu})$ consisting of differential operators~$K^{\mu\nu}$. We define $K^\dagger=\lb(K^\dagger)^{\mu\nu}\rb$ by $(K^\dagger)^{\mu\nu}:=\lb K^{\nu\mu}\rb^\dagger$.

\smallskip

\item Given a matrix differential operator~$K=(K^{\mu\nu})$, a bracket on the space $\hLambda_u$ is defined by $\{\overline{f},\overline{g}\}_{K}:=\int\left(\frac{\delta \overline{f}}{\delta u^\mu}K^{\mu \nu}\frac{\delta \overline{g}}{\delta u^\nu}\right)dx$, where $\of,\og\in\hLambda_u$. A \emph{Poisson operator} is a matrix differential operator $K$ such that the bracket $\{\cdot,\cdot\}_K$ is skewsymmetric and satisfies the Jacobi identity. The space of Poisson operators will be denoted by $\mathcal{PO}_u$. For any symmetric matrix $M=(m^{\alpha\beta})$ with complex coefficients, the matrix differential operator~$M\d_x$ is Poisson.

\smallskip

\item A {\it Hamiltonian system} of PDEs is a system of the form
\begin{gather*}
\frac{\partial u^\alpha}{\partial \tau_i} = K^{\alpha\mu} \frac{\delta\overline{h}_i}{\delta u^\mu},\quad \alpha\in[N],\quad i\in\mbZ_{\ge 1},
\end{gather*}
where $K$ is a Poisson operator and $\oh_i\in\hLambda_u$. The local functionals $\oh_i$ are called the {\it Hamiltonians}. The system is called {\it integrable} if $\{\oh_i,\oh_j\}_K=0$ for all $i,j\geq 1$.

\smallskip

\item We will discuss solutions of Hamiltonian systems. Note that for any differential polynomial $P\in\hcA_u$ and an $N$-tuple $(\bfu^1,\ldots,\bfu^N)\in\mbC[[x,\tau_*,\eps]]^N$ satisfying $\bfu^\alpha|_{x=\tau_*=\eps=0}=0$, the substitution $P|_{u^\alpha_n\mapsto\d_x^n\bfu^\alpha}$ is well defined as a formal power series in the variables~$x$,~$\tau_i$, and~$\eps$. So we will consider solutions $(\bfu^1,\ldots,\bfu^N)$ of Hamiltonian systems only in the class of formal power series satisfying the condition $\bfu^\alpha|_{x=\tau_*=\eps=0}=0$.

\smallskip

\item Consider an integrable Hamiltonian system. Given an arbitrary $N$-tuple $(f^1,\ldots,f^N)\in\mbC[[x,\eps]]^N$ satisfying $f^\alpha|_{x=\eps=0}=0$, the system has a unique solution $(\bfu^1,\ldots,\bfu^N)\in\mbC[[x,\tau_*,\eps]]$ satisfying the initial condition $\bfu^\alpha|_{\tau_*=0}=f^\alpha$.

\smallskip

\item A \emph{Miura transformation} is a change of variables of the form
\begin{align*}
&u^\alpha\mapsto w^\alpha(u^*_*,\eps)=\sum_{k\ge 0}\eps^k f^\alpha_k(u^*_*),\quad \alpha\in[N],\\
&f^\alpha_k\in\cA_{u;k},\quad f^\alpha_0|_{u^*=0}=0,\quad\left.\det\left(\frac{\d f_0^\alpha}{\d u^\beta}\right)\right|_{u^*=0}\ne 0.
\end{align*}
Using this change of variables, any differential polynomial $f\in\hcA_{u}$ can be written as a differential polynomial in the variables $w^1,\ldots,w^N$, which we will denote by~$f(w^*_*,\eps)$. Therefore, a Miura transformation gives an isomorphism $\hcA_u\cong\hcA_w$.

\smallskip

\item A Miura tranformation gives an isomorphism $\hLambda_u\cong\hLambda_w$. A local functional $\oh\in\hLambda_u$ written in the variables~$w^1,\ldots,w^N$ will be denoted by~$\oh[w]$.

\smallskip

\item Consider an arbitrary Miura transformation and the associated isomorphism $\hLambda_u\cong\hLambda_w$. Any matrix differential operator $K$ gives a bracket $\{\cdot,\cdot\}_K$ on $\hLambda_u$, which, under the isomorphism $\hLambda_u\cong\hLambda_w$, induces a bracket $\{\cdot,\cdot\}_{K_w}$ on $\hLambda_w$ where $K_w=(K_w^{\alpha\beta})$ is the matrix differential operator given by
\begin{gather*}
K_{w}^{\alpha\beta}:=\left.\lb\sum_{p,q\ge 0}\frac{\d w^\alpha(u^*_*,\eps)}{\d u^\mu_p}\d_x^p\circ K^{\mu\nu}\circ(-\d_x)^q\circ\frac{\d w^\beta(u^*_*,\eps)}{\d u^\nu_q}\rb\right|_{u^\gamma_m\mapsto \d_x^m u^\gamma(w^*_*,\eps)}.
\end{gather*}
\end{itemize}

\medskip

\begin{remark}\label{remark:indentifying x and t}
We will mainly consider Hamiltonian systems where one of the flows, say $\frac{\d}{\d\tau_1}$, is given by $\frac{\d u^\alpha}{\d\tau_1}=u^\alpha_x$. Then any solution of such system has the form $(\bfu^1,\ldots,\bfu^N)|_{\tau_1\mapsto\tau_1+x}$ for some $\bfu^\alpha\in\mbC[[\tau_*,\eps]]$. Abusing notation, we will call the $N$-tuple $(\bfu^1,\ldots,\bfu^N)\in\mbC[[\tau_*,\eps]]^N$ a solution of the system.
\end{remark}

\medskip

We will use the following standard cohomology classes on $\oM_{g,n}$:
\begin{itemize}
\item The psi-class $\psi_i\in H^2(\oM_{g,n})$, $1\le i\le n$, is the first Chern class of the line bundle over~$\oM_{g,n}$ formed by the cotangent lines at the $i$-th marked point of stable curves.

\smallskip

\item The Hodge class $\lambda_j:= c_j(\mathbb E)\in H^{2j}(\oM_{g,n})$, $j\ge 0$, where~$\mathbb E$ is the rank~$g$ Hodge vector bundle over~$\oM_{g,n}$ whose fibers are the spaces of holomorphic one-forms on stable curves. 

\smallskip

\item The \emph{double ramification (DR) cycle} $\DR_g(a_1,\ldots,a_n)\in H^{2g}(\oM_{g,n})$, $a_1,\ldots,a_n\in\mbZ$, $\sum a_i=0$, is defined as follows. There is a moduli space of projectivized stable maps to~$\CP^1$ relative to~$0$ and~$\infty$, with ramification profile over $0$ given by the negative numbers among the $a_i$-s, ramification profile over $\infty$ given by the positive numbers among the $a_i$-s, and the zeros among the $a_i$-s correspond to additional marked points (see, e.g.,~\cite{BSSZ15} for more details). This moduli space is endowed with a virtual fundamental class, which lies in the homology group of degree $2(2g-3+n)$. The DR cycle~$\DR_g(a_1,\ldots,a_n)$ is the Poincar\'e dual to the pushforward, through the forgetful map to~$\oM_{g,n}$, of this virtual fundamental class. The crucial property of the DR cycle is that for any cohomology class $\theta\in H^*(\oM_{g,n})$ the integral $\int_{\oM_{g,n+1}}\lambda_g\DR_g\left(-\sum a_i,a_1,\ldots,a_n\right)\theta$ is a homogeneous polynomial in $a_1,\ldots,a_n$ of degree~$2g$ (see, e.g.,~\cite[Lemma~3.2]{Bur15}). 
\end{itemize}

\medskip

For a given CohFT $\left\{c_{g,n}\colon V^{\otimes n} \to H^\e(\oM_{g,n})\right\}$, define differential polynomials $g_{\alpha,d}\in\hcA_{u;0}$, $\alpha\in[N]$, $d\in\mbZ_{\ge 0}$, as follows:
\begin{multline*}
g_{\alpha,d}:=\sum_{\substack{g,n\ge 0\\2g-1+n>0}}\frac{\eps^{2g}}{n!}\sum_{\substack{b_1,\ldots,b_n\ge 0\\b_1+\ldots+b_n=2g}}u^{\alpha_1}_{b_1}\cdots u^{\alpha_n}_{b_n}\times\\
\times\Coef_{a_1^{b_1}\cdots a_n^{b_n}}\left(\int_{\oM_{g,n+1}}\DR_g\left(-\sum a_i,a_1,\ldots,a_n\right)\lambda_g\psi_1^d c_{g,n+1}(e_\alpha\otimes\otimes_{i=1}^n e_{\alpha_i})\right).
\end{multline*}
Define $\og_{\alpha,d}:=\int g_{\alpha,d}dx$.

\medskip

\begin{definition}[\cite{Bur15}]
The Hamiltonian system
$$
\frac{\d u^\alpha}{\d t^\beta_q}=\eta^{\alpha\mu}\d_x\frac{\delta\og_{\beta,q}}{\delta u^\mu},\quad 1\le\alpha,\beta\le N,\quad q\in\mbZ_{\ge 0},
$$
is called the \emph{DR hierarchy}.
\end{definition}

\medskip

The Hamiltonians~$\og_{\alpha,d}$ satisfy $\{\og_{\alpha_1,d_1},\og_{\alpha_2,d_2}\}_{\eta^{-1}\d_x}=0$~\cite[Theorem~4.1]{Bur15}, so the DR hierarchy is integrable.

\medskip

We equip the DR hierarchy with the following~$N$ linearly independent Casimirs of the Poisson bracket $\{\cdot,\cdot\}_{\eta^{-1}\d_x}$:
$$
\og_{\alpha,-1}:=\int\eta_{\alpha\beta}u^\beta dx,\quad \alpha\in[N].
$$
Another important object related to the DR hierarchy is the local functional $\og\in\hLambda_{u;0}$ defined by the relation~\cite[Section 4.2.5]{Bur15}:
\begin{gather*}
\og_{\un,1}=(D-2)\og,
\end{gather*}
where $D:=\sum_{n\ge 0}(n+1)u^\alpha_n\frac{\d}{\d u^\alpha_n}$, $\og_{\un,1}:=A^\alpha\og_{\alpha,1}$, and the complex numbers $A^\alpha$ are defined by $e=A^\alpha e_\alpha$.

\medskip


\section{Bihamiltonian structure}

We associate with a differential polynomial $f\in\hcA_u$ a sequence of differential operators indexed by $\alpha\in[N]$ and $k\in\mbZ_{\ge 0}$: 
\begin{gather*}
L_\alpha^k(f):=\sum_{i\ge k}{i\choose k}\frac{\d f}{\d u^\alpha_i}\d_x^{i-k}.
\end{gather*}
We will often use the notation $L_\alpha(f):=L^0_\alpha(f)$. These operators satisfy the property
\begin{gather*}
L_\alpha^k(\d_x f)=\d_x\circ L_\alpha^k(f)+L_\alpha^{k-1}(f),\quad k\ge 0,
\end{gather*}
where we adopt the convention $L_{\alpha}^{l}(f):=0$ for $l<0$. Given a symmetric matrix $(\eta^{\alpha\beta})$ with complex coefficients, we associate with any local functional $\oh\in\hLambda_u$ a sequence of matrix differential operators $\hOmega^k(\oh)=\left(\hOmega^k(\oh)^{\alpha\beta}\right)$, indexed by $k\in\mbZ_{\ge 0}$, defined by
$$
\hOmega^k(\oh)^{\alpha\beta}:=\eta^{\alpha\mu}\eta^{\beta\nu}L^k_\nu\left(\frac{\delta\oh}{\delta u^\mu}\right).
$$
These operators satisfy the property $\hOmega^k(\og)^\dagger=(-1)^k\hOmega^k(\og)$ \cite[Lemma~1.5]{BRS21}. We will often use the notation $\hOmega(\oh):=\hOmega^0(\oh)$.

\medskip

Consider a homogeneous CohFT $\{c_{g,n}\colon V^{\otimes n}\to H^\e(\oM_{g,n})\}$. We fix a homogeneous basis $e_1,\ldots,e_N\in V$, with $\deg e_\alpha=q_\alpha$, and introduce complex numbers $r^\gamma$ by $\overline{r}=r^\gamma e_\gamma$. Define a diagonal matrix $\mu$ by
$$
\mu:=\diag(\mu_1,\ldots,\mu_N),\quad\text{where}\quad \mu_\alpha:=q_\alpha-\frac{\delta}{2}.
$$
Note that $\mu\eta+\eta\mu=0$. Define an operator $\hE$ on $\hcA_u$ by
$$
\hE:=\sum_{n\ge 0}\left((1-q_\alpha)u^\alpha_n+\delta_{n,0}r^\alpha\right)\frac{\d}{\d u^\alpha_n}+\frac{1-\delta}{2}\eps\frac{\d}{\d\eps}.
$$
Consider now the DR hierarchy associated to our homogeneous CohFT.

\medskip

\begin{definition}[\cite{BRS21}]
Define a matrix differential operator $K^\DR=(K^{\DR;\alpha\beta})$ by
\begin{gather}\label{eq:formula for KDR}
K^\DR:=\hE\left(\hOmega(\og)\right)\circ\d_x+\hOmega(\og)_x\circ\left(\frac{1}{2}-\mu\right)+\d_x\circ\hOmega^1(\og)\circ\d_x,
\end{gather}
where the notation $\hE\big(\hOmega(\og)\big)$ (respectively, $\hOmega(\og)_x$) means that we apply the operator $\hE$ (respectively, $\d_x$) to the coefficients of the operator $\hOmega(\og)$.
\end{definition}

\medskip

Recall that two Poisson operators $K_1$ and $K_2$ are said to be \emph{compatible} if the linear combination $K_2-\lambda K_1$ is a Poisson operator for any $\lambda\in\mbC$.

\medskip

\begin{conjecture}[\cite{BRS21}]{\ }
\label{conjecture:main}
\begin{enumerate}
\item The matrix differential operator $K^\DR$ is Poisson and is compatible with the Poisson operator $\eta^{-1}\d_x$.

\smallskip

\item The Poisson operator $K^\DR$ endows the DR hierarchy with a bihamiltonian structure with the following bihamiltonian recursion relation:
\begin{gather}\label{eq:bihamiltonian recursion for DR hierarchy}
\left\{\cdot,\og_{\alpha,d}\right\}_{K^\DR}=\left(d+\frac{3}{2}+\mu_\alpha\right)\left\{\cdot,\og_{\alpha,d+1}\right\}_{\eta^{-1}\d_x}+A^\beta_\alpha\left\{\cdot,\og_{\beta,d}\right\}_{\eta^{-1}\d_x},\quad d\ge -1,
\end{gather}
where $A^\beta_\alpha:=\eta^{\beta\nu}c_{0,3}(e_\nu\otimes e_\alpha\otimes\overline{r})$.
\end{enumerate}
\end{conjecture}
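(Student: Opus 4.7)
The plan is to transport the bihamiltonian structure of the DZ hierarchy to the DR side through the regular Miura transformation established in \cite{BLS24}, and then to identify the transported second Poisson operator with $K^\DR$. The DZ hierarchy of a homogeneous semisimple CohFT is bihamiltonian, with first Poisson structure $\eta^{-1}\d_x$ and a polynomial second Poisson structure (constructed by Dubrovin at the dispersionless level in \cite{Dub96} and shown to remain polynomial in the dispersive regime by \cite{LWZ21}). Since the DR/DZ equivalence is a \emph{regular} Miura transformation, transporting the DZ second bracket to the DR variables produces a polynomial matrix differential operator which is automatically Poisson, compatible with $\eta^{-1}\d_x$, and sits inside a bihamiltonian recursion of the form \eqref{eq:bihamiltonian recursion for DR hierarchy}; the structural coefficients $A^\beta_\alpha$ and $\mu_\alpha$ are determined by the genus-zero three-point functions and the grading data of the CohFT, both of which are preserved by any regular Miura transformation. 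All three assertions of the conjecture therefore reduce to identifying this transported operator with the explicit formula \eqref{eq:formula for KDR}.

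The identification is most efficiently carried out by computing the transport in two stages. Let $\Phi_\DZ$ be the singular Miura transformation sending the principal hierarchy of the underlying Dubrovin--Frobenius manifold (the dispersionless part of the DZ hierarchy) to the full DZ hierarchy, and let $\Phi_{\DZ\to\DR}$ denote the regular DR/DZ transformation; set $\Phi := \Phi_{\DZ\to\DR}\circ\Phi_\DZ$. The key technical input is Lemma \ref{lemma:singular-3}, which asserts that $\Phi$ is \emph{purely singular}. Applying the Miura transformation formula for Poisson operators to Dubrovin's dispersionless second bracket $K^{\mathrm{princ}}$, whose coefficients involve only the Frobenius data $(\eta,\mu,\overline{r})$ together with the genus-zero three-point functions, one obtains a finite sum of terms indexed by the jet orders $(p,q)$ appearing in that formula. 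Because $\Phi$ is purely singular, only a handful of these terms can contribute to the regular part; a careful bookkeeping, relying on the identity $\mu\eta+\eta\mu=0$, the homogeneity properties of $\og$, and the symmetry $\hOmega^k(\og)^\dagger=(-1)^k\hOmega^k(\og)$ from \cite[Lemma~1.5]{BRS21}, identifies the resulting regular summands with precisely the three terms $\hE(\hOmega(\og))\circ\d_x$, $\hOmega(\og)_x\circ(\tfrac{1}{2}-\mu)$, and $\d_x\circ\hOmega^1(\og)\circ\d_x$ of \eqref{eq:formula for KDR}. By \cite{LWZ21} combined with \cite{BLS24} the transported operator is polynomial, so the singular remainder must vanish, and we conclude $(K^{\mathrm{princ}})_\Phi = K^\DR$.

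The main obstacle is Lemma \ref{lemma:singular-3} itself: proving that the composite $\Phi$ has no regular part. Each of $\Phi_\DZ$ and $\Phi_{\DZ\to\DR}$ is defined at the level of tau-symmetric Hamiltonian densities in terms of the corresponding tau-function (the DZ tau-function on one hand, and the generating function of DR intersection numbers with $\lambda_g$ on the other), so the lemma demands a precise genus-by-genus comparison of these two tau-structures. I expect this to be the technical heart of the argument, plausibly resting on the string and dilaton equations for the DR integrals, known vanishing results for tautological classes restricted to the DR cycle, and the explicit form of the DR/DZ equivalence provided by \cite{BLS24}. Once Lemma \ref{lemma:singular-3} is in hand, the matching of the regular part with $K^\DR$ is essentially a finite computation, and the enforced vanishing of the singular part is a formal consequence of the polynomiality result of \cite{LWZ21}, so the full conjecture reduces to this single singular-transformation statement.
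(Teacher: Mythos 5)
Your proposal follows essentially the same route as the paper: reduce to showing that the transported second DZ bracket equals $K^\DR$, decompose the transport as (singular principal-to-DZ) composed with (regular DZ-to-DR), prove the composite is purely singular (Lemma \ref{lemma:singular-3}), compute the polynomial part of the transported Dubrovin bracket, and invoke the polynomiality of \cite{LWZ21} to kill the singular remainder. The ingredients you anticipate for Lemma \ref{lemma:singular-3} (the string equation for the DR hierarchy and vanishing results for DR intersection numbers) are exactly what the paper uses; the only piece you leave implicit is the second half of that lemma, namely the explicit identification of the polynomial part of $v^1_x\lb u^\alpha(v^*_*,\eps)-v^\alpha\rb$ with $\d_x\lb g^{\alpha,0}-g^{\alpha,0}|_{\eps=0}\rb$, which is what makes the final ``finite computation'' land on formula \eqref{eq:formula for KDR}.
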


\medskip

\begin{remark}\label{remark:more on KDR}{\ }
\begin{itemize}
\item Consider the Dubrovin--Frobenius potential 
$$
F:=\sum_{n\ge 3}\sum_{\alpha_1,\ldots,\alpha_n\in[N]}\frac{\prod t^{\alpha_i}}{n!}\int_{\oM_{0,n}}c_{0,n}(\otimes_{i=1}^n e_{\alpha_i})\in\mbC[[t^1,\ldots,t^N]]
$$
corresponding to our CohFT. Denote 
$$
\tb^{\alpha\beta}:=\left.\eta^{\alpha\nu}\eta^{\beta\theta}\frac{\d^2F}{\d t^\nu\d t^\theta}\right|_{t^\gamma\mapsto u^\gamma},\qquad \tg^{\alpha\beta}:=((1-q_\gamma)u^\gamma+r^\gamma)\frac{\d\tb^{\alpha\beta}}{\d u^\gamma}.
$$
Then we have
\begin{gather*}
\left.K^{\DR;\alpha\beta}\right|_{\eps=0}=\tg^{\alpha\beta}\d_x+\d_x \tb^{\alpha\beta}\left(\frac{1}{2}-\mu_\beta\right),
\end{gather*}
and we see that the right-hand side coincides with a famous Poisson operator found by Dubrovin~\cite[Theorem~6.3]{Dub96}, which endows the principal hierarchy associated with $F$ with a bihamiltonian structure (see more details in Section~\ref{section:DZ hierarchy} below). So we see that the operator $K^\DR$ can be naturally considered as a deformation of the Dubrovin operator. We also see that the local functional~$\og$ in formula~\eqref{eq:formula for KDR} plays the role of the potential~$F$ in Dubrovin's formula.  

\smallskip

\item In~\cite[Lemma~1.14]{BRS21}, the authors proved the following alternative expression for the operator $K^\DR$:
\begin{gather*}
K^\DR=\d_x\circ\hOmega(\og)\circ\left(\frac{1}{2}-\mu\right)+\left(\frac{1}{2}-\mu\right)\circ\hOmega(\og)\circ\d_x+\eta^{-1}A\eta^{-1}\d_x+\d_x\circ\hOmega^1(\og)\circ\d_x,
\end{gather*}
where the matrix $A=(A_{\alpha\beta})$ is given by $A_{\alpha\beta}:=c_{0,3}(e_\alpha\otimes e_\beta\otimes\ovr)$.
\end{itemize}
\end{remark}

\medskip

Recall that a CohFT is called \emph{semisimple}, if the algebra defined by the structure constants $\theta^\alpha_{\beta\gamma}:=\eta^{\alpha\nu}c_{0,3}(e_\nu\otimes e_\beta\otimes e_\gamma)$ doesn't have nilpotents.

\medskip

\begin{theorem}\label{theorem:main}
Conjecture~\ref{conjecture:main} is true if the homogeneous CohFT is semisimple.
\end{theorem}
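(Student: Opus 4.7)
The plan is to establish both parts of Conjecture~\ref{conjecture:main} simultaneously by transporting the bihamiltonian structure of the Dubrovin--Zhang hierarchy to the DR side and matching the result with the explicit expression~\eqref{eq:formula for KDR}. The input from the literature is twofold: first, the regular Miura transformation of \cite{BLS24} identifies the DR and DZ hierarchies of a semisimple CohFT, so any polynomial Poisson operator on one side corresponds to a polynomial Poisson operator on the other, and both compatibility with $\eta^{-1}\d_x$ and the bihamiltonian recursion~\eqref{eq:bihamiltonian recursion for DR hierarchy} are transported intact; second, \cite{LWZ21} establishes polynomiality of the second Poisson bracket of the DZ hierarchy obtained by applying the singular normal-coordinate transformation to Dubrovin's dispersionless second bracket on the principal hierarchy. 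Combining these two inputs, one obtains on the DR side a polynomial Poisson operator $K^{\mathrm{DR},\mathrm{new}}$ which is compatible with $\eta^{-1}\d_x$ and satisfies the bihamiltonian recursion; it then suffices to prove the identification $K^{\mathrm{DR},\mathrm{new}}=K^\DR$.

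To carry out this identification, I would consider the composite Miura transformation from the dispersionless principal hierarchy to the DR hierarchy, obtained by composing the singular normal-coordinate-to-DZ transformation with the regular DR/DZ transformation of \cite{BLS24}. The central technical statement, Lemma~\ref{lemma:singular-3}, asserts that this composite is \emph{purely singular} in a precise sense: it carries no nontrivial regular contribution beyond the identity. I would then apply the standard pullback formula for matrix differential operators to Dubrovin's dispersionless second bracket $\tg^{\alpha\beta}\d_x+\d_x\tb^{\alpha\beta}(\tfrac{1}{2}-\mu_\beta)$ displayed in Remark~\ref{remark:more on KDR}. Each term produced by the transformation formula is either (a)~a regular contribution whose shape is forced by the dispersionless input, or (b)~a singular contribution that involves at least one factor from the singular tail of the composite transformation. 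Purity restricts the possible regular contributions very sharply, and a direct inspection shows that they assemble into exactly the three summands of~\eqref{eq:formula for KDR}, with the DR local functional $\og$ playing the role that $F$ plays in Dubrovin's formula because $g_{\un,1}/(D-2)$ packages all higher-genus corrections.

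Polynomiality of $K^{\mathrm{DR},\mathrm{new}}$ then forces every remaining singular contribution produced by the pullback formula to cancel identically; only the regular terms survive, yielding $K^{\mathrm{DR},\mathrm{new}}=K^\DR$ and hence statement~(1) of the conjecture. Statement~(2), the bihamiltonian recursion, is automatic from the same transport argument: the recursion for Dubrovin's bracket on the principal hierarchy is classical (\cite[Theorem~6.3]{Dub96}), it propagates to the DZ hierarchy through its tau-symmetric structure, and it is then carried to the DR side by the regular Miura transformation of \cite{BLS24}, where it matches the form~\eqref{eq:bihamiltonian recursion for DR hierarchy} since the coefficients $\mu_\alpha$ and $A^\beta_\alpha$ are invariants of the underlying Dubrovin--Frobenius manifold and are unchanged by the regular transformation. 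The main obstacle in the plan is Lemma~\ref{lemma:singular-3}: one must compare, order by order in $\eps$, the singular tails of the two ingredient Miura transformations and establish the precise cancellations that make the composition purely singular. This analysis relies on an explicit understanding of both the DZ normal coordinates and the regular DR/DZ change of variables, and is where the bulk of the technical work lies.
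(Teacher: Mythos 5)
Your proposal follows essentially the same route as the paper: reduce everything to the identity $K^\DR=\lb K^{\DZ}_{(2)}\rb_u$ by transporting the DZ bihamiltonian structure through the regular Miura transformation of \cite{BLS24} together with the polynomiality result of \cite{LWZ21}, then compute the polynomial part of the pullback of Dubrovin's dispersionless bracket along the purely singular composite transformation from the principal hierarchy to the DR hierarchy (Lemma~\ref{lemma:singular-3}) and match the surviving terms with~\eqref{eq:formula for KDR}. The only ingredient you leave implicit is the explicit identification of the leading polynomial coefficient of the singular tail, $\lb v^1_x(u^\alpha(v^*_*,\eps)-v^\alpha)\rb^\pol=\left.\d_x\lb g^{\alpha,0}-g^{\alpha,0}|_{\eps=0}\rb\right|_{u^\gamma_c\mapsto v^\gamma_c}$, which is what actually makes the regular terms assemble into $K^\DR$; the paper proves this via the string equation for the DR potentials and \cite[Theorem~4.9]{BS22}, but this is consistent with where you locate the technical work.
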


After some preparation in the next section, we will prove this theorem in Section~\ref{section:proof of the main theorem}.

\medskip

\begin{remark}
There is a simple argument allowing to extend the class of CohFTs, beyond semisimple CohFTs, for which we can prove Conjecture~\ref{conjecture:main}. Consider an arbitrary homogeneous CohFT $\{c_{g,n}\colon V^{\otimes n}\to H^\e(\oM_{g,n})\}$, with fixed homogeneous basis $e_1,\ldots,e_N\in V$. Let $\tau^1,\ldots,\tau^N$ be formal variables and $\otau=(\tau^1,\ldots,\tau^N)$. A \emph{formal shift} of this CohFT is a CohFT with coefficients in $\mbC[[\tau^1,\ldots,\tau^N]]$
$$
\left\{c^{\otau}_{g,n}\colon V^{\otimes n}\to H^\e(\oM_{g,n})\otimes\mbC[[\tau^1,\ldots,\tau^N]]\right\}
$$
defined by (see, e.g,~\cite[Section~1]{PPZ15})
$$
c_{g,n}^{\otau}(\otimes_{i=1}^n v_i):=\sum_{m\ge 0}\frac{1}{m!}(\pi_m)_*\lb c_{g,n+m}(\otimes_{i=1}^n v_i\otimes (\tau^\alpha e_\alpha)^{\otimes m})\rb\in H^\e(\oM_{g,n})\otimes\mbC[[\tau^1,\ldots,\tau^N]],
$$
where $\pi_m\colon\oM_{g,n+m}\to\oM_{g,n}$ is the map that forgets the last $m$ marked points. Suppose that there exists a~neighborhood $U$ of~$0$ in~$V$ such that $c_{g,n}^{\otau}(\otimes_{i=1}^n v_i)$ is convergent for any~$\tau^\alpha e_\alpha\in U$. Then, for any such vector $\otau\in\mbC^N$, the CohFT $\{c_{g,n}^{\otau}\}$ is homogeneous with the same constants~$q_\alpha$ and~$\delta$, but with the constant~$r^\alpha$ replaced by $r^\alpha+\tau^\alpha(1-q_\alpha)$. The coefficients of the differential polynomials $g_{\alpha,d}$ and of the operator~$K^\DR$, constructed for the CohFT $\{c_{g,n}^{\otau}\}$, depend holomorphically on $\otau$. Suppose moreover, that the CohFT $\{c_{g,n}^{\otau}\}$ is semisimple for generic $\tau^\alpha e_\alpha\in U$. Then, taking the limit at nonsemisimple points, using Theorem~\ref{theorem:main}, we conclude that Conjecture~\ref{conjecture:main} is true for any CohFT $\{c_{g,n}^{\otau}\}$ with $\tau^\alpha e_\alpha \in U$. An important example is given by the $r$-spin CohFT $\{c_{g,n}^{\rspin}\colon(\mbC^{r-1})^{\otimes n}\to H^\e(\oM_{g,n})\}$ (see, e.g,~\cite[Section~1]{PPZ15}). This CohFT is not semisimple for any $r\ge 3$, but the formal shift is convergent for any $\otau\in\mbC^{r-1}$ and is semisimple for generic $\otau\in\mbC^{r-1}$. We conclude that Conjecture~\ref{conjecture:main} is true for the $r$-spin CohFT.
\end{remark}

\medskip


\section{Rational Miura transformations}

We fix a positive integer $N$ throughout this section.

\medskip

\subsection{Definitions}

Consider formal variables $v^1,\ldots,v^N$. Define an extension~$\cA^\wk_v$ of the algebra~$\cA_v$ as the algebra of formal power series in the shifted variables $(v^\alpha_n-\delta^{\alpha,1}\delta_{n,1})$, $\alpha\in[N]$, $n\in\mbZ_{\ge 0}$, with complex coefficients. Denote by $\cA^\wkt_v\subset\cA^\wk_v$ the subalgebra consisting of elements $f\in\cA^\wk_v$ satisfying $\frac{\d f}{\d v^\alpha_n}=0$ for all $\alpha\in[N]$ and big enough~$n$. Denote $\hcA^\wk_v:=\cA^\wk_v[[\eps]]$ and $\hcA^\wkt_v:=\cA^\wkt_v[[\eps]]$. We have the chain of inclusions
$$
\hcA_{v}\subset\hcA^\wkt_{v}\subset\hcA^\wk_{v}.
$$
Note that the operator $\d_x\colon\hcA_v\to\hcA_v$ can be extended to the algebra $\hcA^\wk_v$ by the same formula: 
$$
\d_x:=\sum_{i\ge 0}v^\alpha_{i+1}\frac{\d}{\d v^\alpha_i}\colon\hcA^\wk_v\to\hcA^\wk_v.
$$
For $f\in\hcA^\wk_v$ and $d\in\mbZ$, we will write $\deg f=d$ if $\lb\sum_{n\ge 0}n\,v^\alpha_n\frac{\d}{\d v^\alpha_n}-\eps\frac{\d}{\d\eps}\rb f=d\cdot f$. We denote by $\hcA^\wk_{v;d}\subset\hcA^\wk_v$ the vector subspace of elements of degree $d$. 

\medskip

Various definitions, which we gave for the algebra of differential polynomials, can be immediately extended to the algebra $\hcA^\wkt_v$.
\begin{itemize}
\item A variational derivative $\frac{\delta}{\delta v^\alpha}\colon\hcA^\wkt_v\to\hcA^\wkt_v$ is defined in the same way. There is an inclusion $\Im(\d_x)\subset\Ker(\frac{\delta}{\delta v^\alpha})$.

\smallskip

\item There are spaces $\Lambda^\wkt_v:=\cA^\wkt_v/(\Im(\d_x)\oplus\mbC)$ and $\hLambda^\wkt_v:=\Lambda^\wkt_v[[\eps]]$. Variational derivatives $\frac{\delta}{\delta v^\alpha}\colon \hLambda^\wkt_v\to\hcA^\wkt_v$, $\alpha\in[N]$, are well defined. Using the algebra $\hcA^\wkt_v$, matrix differential operators are defined in the same way, and also the associated brackets on the space~$\hLambda^\wkt_v$. The associated space of Poisson operators will be denoted by $\mathcal{PO}^\wkt_v$.

\smallskip

\item Using Poisson brackets on $\hLambda^\wkt_v$, we define Hamiltonian systems of PDEs.

\smallskip

\item We will consider changes of variables of the form
\begin{gather}\label{eq:weak Miura transformation}
v^\alpha\mapsto u^\alpha(v^*_*,\eps)=f^\alpha_0+\sum_{k\ge 1}\eps^k f^\alpha_k,\quad 
\begin{minipage}{9cm}
$f^\alpha_0\in\mbC[[v^1,\ldots,v^N]],\quad f^\alpha_k\in\cA^\wkt_v$ for $k\ge 1$,\\
$f^\alpha_0|_{v^*=0}=0,\quad\left.\det\left(\frac{\d f^\alpha_0}{\d v^\beta}\right)\right|_{v^*=0}\ne 0$,
\end{minipage}
\end{gather}
and call them \emph{weak Miura transformations}. These transformations form a group.
\end{itemize}

\medskip

\begin{remark}\label{remark:inverse to weak Miura}
It is useful to note that for a weak Miura transformation~\eqref{eq:weak Miura transformation} the terms $h^\alpha_1,h^\alpha_2,\ldots$ of the inverse weak Miura transformation
$$
u^\alpha\mapsto v^\alpha(u^*_*,\eps)=h^\alpha_0+\sum_{k\ge 1}\eps^k h^\alpha_k(u^*_*)
$$
can be recursively determined by the relation
\begin{gather}\label{eq:resursion for inverse weak Miura}
\left.h^\alpha_{k+1}(u^*_*)\right|_{u^\gamma_c\mapsto\d_x^c f^\gamma_0(v^*_*)}+\Coef_{\eps^{k+1}}\lb\sum_{i=0}^k \left.h^\alpha_i(u^*_*)\eps^i\right|_{u^\gamma_c\mapsto\d_x^c u^\gamma(v^*_*,\eps)}\rb=0,\quad k\ge 0.
\end{gather}
\end{remark}

\medskip

\begin{remark}
We see that various notions that we consider in this paper, like local functional, Poisson operator, Hamiltonian system, can be based on the algebra of differential polynomials~$\hcA_v$ or on the bigger algebra~$\hcA^\wkt_v$. Let us agree that when we add the adjective ``polynomial'', then this signals that the corresponding notion is based on the algebra of differential polynomials. 
\end{remark}

\medskip

Note that for any $i\in\mbZ_{\ge 1}$ the fraction $\frac{1}{(v^1_x)^i}$ can be naturally considered as an element of the algebra~$\cA^\wkt_v$:
$$
\frac{1}{(v^1_x)^i}=(1+(v^1_x-1))^{-i}=\sum_{k\ge 0}{-i\choose k}(v^1_x-1)^k,
$$
and clearly $\deg\lb\frac{1}{(v^1_x)^i}\rb=-i$. For any $d\in\mbZ$ let~$\cA^\rt_{v;d}$ be the vector subspace of $\cA^\wk_v$ consisting of elements of the form
\begin{gather*}
\sum_{i\le m}P_i(v^1_x)^i,\quad m\in\mbZ,\quad P_i\in\cA_{v;d-i},\quad \frac{\d P_i}{\d v^1_x}=0.
\end{gather*}
Clearly $\cA^\rt_{v;d}\subset\cA^\wk_{v;d}$. Let 
$$
\cA^{\rt}_v:=\bigoplus_{d\in\mbZ}\cA^\rt_{v;d}\subset\cA^\wk_v,\qquad \cA^\rtt_v:=\cA^\rt_v\cap\cA^\wkt_v\subset\cA^\rt_v.
$$
Define extended spaces 
$$
\hcA^\rt_v:=\cA^\rt_v[[\eps]],\qquad\hcA^\rtt_v:=\cA^\rtt_v[[\eps]].
$$

\medskip

We will consider changes of variables of the form
$$
v^\alpha\mapsto u^\alpha(v^*_*,\eps)=v^\alpha+\sum_{k\ge 1}\eps^k f^\alpha_k,\quad 
f^\alpha_k\in\cA^\rtt_{v;k}.
$$
and call them \emph{rational Miura transformations}. They form a subgroup in the group of weak Miura transformations. For an element $f=\sum_{i\le m}P_i(v^1_x)^i\in\cA^\rt_v$, $\frac{\d P_i}{\d v^1_x}=0$, define the \emph{polynomial} and the \emph{singular} parts by
\begin{gather*}
f^\pol:=\sum_{i=0}^m P_i(v^1_x)^i\in\cA_v,\qquad f^\sing:=f-f^\pol\in\cA^\rt_v.
\end{gather*}
A \emph{purely singular Miura transformation} is a rational Miura transformation $v^\alpha\mapsto u^\alpha(v^*_*,\eps)$ satisfying the conditions
\begin{gather}\label{eq:conditions for purely singular}
u^\alpha(v^*_*,\eps)^\pol=v^\alpha,\qquad \frac{\d u^\alpha(v^*_*,\eps)}{\d v^1}=\delta^{\alpha,1}.
\end{gather}

\medskip

\subsection{Technical results}

\begin{lemma}\label{lemma:singular-1}{\ }
\begin{enumerate}
\item For any purely singular Miura transformation $v^\alpha\mapsto u^\alpha(v^*_*,\eps)$ and $f\in\hcA^\rtt_u$, we have
\begin{gather*}
\lb f|_{u^\gamma_c\mapsto \d_x^c u^\gamma(v^*_*,\eps)}\rb^\pol=\lb(f^\pol)_{u^\gamma_c\mapsto \d_x^c u^\gamma(v^*_*,\eps)}\rb^\pol.
\end{gather*}
If, moreover, the condition $\frac{\d f^\pol}{\d u^1_x}=0$ is satisfied, then we have
\begin{gather*}
\lb f|_{u^\gamma_c\mapsto \d_x^c u^\gamma(v^*_*,\eps)}\rb^\pol=(f^\pol)_{u^\gamma_c\mapsto v^\gamma_c}.
\end{gather*}

\smallskip

\item Purely singular Miura transformations form a subgroup in the group of all rational Miura transformations.

\smallskip

\end{enumerate}
\end{lemma}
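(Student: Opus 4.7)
The plan has two conceptual ingredients: a structural unpacking of~\eqref{eq:conditions for purely singular} and a closure property of the class of elements of $\hcA^\rtt_v$ with vanishing polynomial part. To start, I write a purely singular Miura transformation as $u^\alpha(v^*_*,\eps)=v^\alpha+\sum_{k\ge 1}\eps^k f^\alpha_k$ with $f^\alpha_k\in\cA^\rtt_{v;k}$; the hypothesis $u^\alpha(v^*_*,\eps)^\pol=v^\alpha$ is equivalent to $(f^\alpha_k)^\pol=0$, and $\d u^\alpha/\d v^1=\delta^{\alpha,1}$ is equivalent to $\d f^\alpha_k/\d v^1=0$. Decomposing $f^\alpha_k=\sum_{i\ge 1}P^{\alpha,k}_{-i}(v^1_x)^{-i}$ with $P^{\alpha,k}_{-i}$ not involving $v^1_x$, the second hypothesis forces each $P^{\alpha,k}_{-i}$ to be independent of $v^1$ as well. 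A direct computation using $\d_x=\sum_{\beta,n}v^\beta_{n+1}\d/\d v^\beta_n$ then shows that $\d_x$ preserves the class of elements of $\cA^\rt_v$ with vanishing polynomial part and independent of~$v^1$; consequently $\d_x^c u^\gamma(v^*_*,\eps)=v^\gamma_c+\eps\cdot g^\gamma_c$ with every $\eps$-coefficient of $g^\gamma_c$ in this class.

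For part~(1)(a), I plan to establish the stronger claim that any $F\in\hcA^\rtt_u$ with $F^\pol=0$ satisfies $(F|_{u^\gamma_c\mapsto\d_x^c u^\gamma(v^*_*,\eps)})^\pol=0$. A typical summand of $F$ has the form $P(u^*_*)(u^1_x)^{-i}$ with $i\ge 1$ and $P$ not involving $u^1_x$. Expanding $(u^1_x|_{\mathrm{sub}})^{-i}=(v^1_x)^{-i}(1+\eps g^1_1/v^1_x)^{-i}$ as a geometric series, every $\eps$-coefficient is a sum of terms $(v^1_x)^{-N}$ with $N\ge i\ge 1$ and $v^1_x$-independent coefficients. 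Similarly, Taylor expansion gives $P|_{\mathrm{sub}}=P(v^*_*)+\eps R$ with $R$ having vanishing polynomial part: the coefficient at $\eps^k$ for $k\ge 1$ is a multilinear combination of the $g^\gamma_c$ (for $(\gamma,c)\ne(1,1)$) with $v^1_x$-independent polynomial coefficients (derivatives of $P$). Multiplying two elements each of which is a series in $(v^1_x)^{-1}$ with $v^1_x$-independent coefficients produces such a series, so $P|_{\mathrm{sub}}\cdot(u^1_x|_{\mathrm{sub}})^{-i}$ has vanishing polynomial part; summing over the summands of $F$ gives $(F|_{\mathrm{sub}})^\pol=0$. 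Applying this with $F=f^\sing$ yields part~(1)(a). Part~(1)(b) follows from the same Taylor expansion applied to $F=f^\pol$ under the extra hypothesis $\d f^\pol/\d u^1_x=0$: the $\eps^0$-term of $f^\pol|_{\mathrm{sub}}$ is $f^\pol(v^*_*)=f^\pol|_{u^\gamma_c\mapsto v^\gamma_c}$, and the higher $\eps$-terms have vanishing polynomial part by the same product argument.

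For part~(2), closure under composition of purely singular transformations $v\to u$ and $u\to w$ is immediate from~(1). The condition $(w^\alpha(v^*_*,\eps))^\pol=v^\alpha$ follows from applying (1)(b) to $F=w^\alpha(u^*_*,\eps)$, whose polynomial part $u^\alpha$ does not depend on $u^1_x$: one obtains $(w^\alpha(v^*_*,\eps))^\pol=u^\alpha|_{u^\gamma_c\mapsto v^\gamma_c}=v^\alpha$. For the derivative condition, the chain rule combined with $[\d/\d v^1_0,\d_x]=0$ and $\d u^\beta(v^*_*,\eps)/\d v^1=\delta^{\beta,1}$ reduces $\d w^\alpha/\d v^1$ to $(\d w^\alpha/\d u^1)|_{\mathrm{sub}}=\delta^{\alpha,1}$. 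Closure under inverse is proved by induction on $k$ using~\eqref{eq:resursion for inverse weak Miura} with $f^\gamma_0=v^\gamma$, which reads
$$
h^\alpha_{k+1}(v^*_*)=-\sum_{i=0}^{k}\Coef_{\eps^{k+1-i}}\lb h^\alpha_i\big|_{u^\gamma_c\mapsto\d_x^c u^\gamma(v^*_*,\eps)}\rb.
$$
The base case $h^\alpha_1=-f^\alpha_1$ is immediate. In the inductive step, the $i=0$ term contributes $-f^\alpha_{k+1}$, which has vanishing polynomial part and is $v^1$-independent; for $i\ge 1$, part~(1)(a) applied to $F=h^\alpha_i$ gives that $h^\alpha_i|_{\mathrm{sub}}$ has vanishing polynomial part, and the chain rule together with the $u^1$-independence of $h^\alpha_i$ (from the inductive hypothesis) yields its $v^1$-independence.

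The main anticipated obstacle is the careful bookkeeping of the two independence constraints (vanishing polynomial part and $v^1$-independence) as they propagate through $\d_x$, substitution, and products of series in $(v^1_x)^{-1}$. The essential enabling observation is that the singular coefficients $P^{\alpha,k}_{-i}$ are $v^1_x$-independent; without this, the product arguments would fail due to potential cancellations such as $v^1_x\cdot(v^1_x)^{-1}=1$ that could create polynomial terms from singular ones.
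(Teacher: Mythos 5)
Your proof is correct and follows essentially the same route as the paper's: the key facts are that $\lb\d_x^c u^\gamma(v^*_*,\eps)\rb^\pol=v^\gamma_c$ with purely singular, $v^1_x$-independent corrections, that products and substitutions preserve pure singularity, and induction via the recursion of Remark~\ref{remark:inverse to weak Miura} for the inverse. You simply spell out in detail the steps the paper compresses into ``From this Part 1 immediately follows'' and ``easily proved by induction.''
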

\begin{proof}
{\it 1}. Note that $\lb\d_x^c u^\alpha(v^*_*,\eps)\rb^\pol=v^\alpha_c$, which implies that if a differential polynomial $h\in\cA_u$ satisfies $\frac{\d h}{\d u^1_x}=0$, then $\lb h|_{u^\gamma_c\mapsto \d_x^c u^\gamma(v^*_*,\eps)}\rb^\pol=h|_{u^\gamma_c\mapsto v^\gamma_c}$ and $\lb\left.\frac{h}{(u^1_x)^k}\right|_{u^\gamma_c\mapsto \d_x^c u^\gamma(v^*_*,\eps)}\rb^\pol=0$, where $k\ge 1$. From this Part 1 immediately follows.

\medskip

{\it 2}. The fact that the composition of purely singular Miura transformations satisfies the first condition in~\eqref{eq:conditions for purely singular} follows from Part~1. The second condition follows from the chain rule. Using Remark~\ref{remark:inverse to weak Miura}, we see that the inverse to a purely singular Miura transformation is recursively determined by the relation
\begin{gather*}
\left.h^\alpha_{k+1}(u^*_*)\right|_{u^\gamma_c\mapsto v^\gamma_c}+\Coef_{\eps^{k+1}}\lb u^\alpha(v^*_*)+\sum_{i=1}^k \left.h^\alpha_i(u^*_*)\eps^i\right|_{u^\gamma_c\mapsto\d_x^c u^\gamma(v^*_*,\eps)}\rb=0,\quad k\ge 0.
\end{gather*}
Then both conditions in~\eqref{eq:conditions for purely singular} are easily proved by the induction on~$k$, using Part~1 and the chain rule.
\end{proof}

\medskip

For a differential operator $K=\sum_{i\ge 0}f_i\d_x^i$, $f_i\in\hcA^\rtt_v$, we denote $K^\pol:=\sum_{i\ge 0}f_i^\pol\d_x^i$. 

\medskip

\begin{lemma}\label{lemma:singular-2}
Consider a purely singular Miura transformation $v^\alpha\mapsto u^\alpha(v^*_*,\eps)$ and a matrix differential operator $K=(K^{\alpha\beta})$ of the form
$$
K^{\alpha\beta}=g^{\alpha\beta}\d_x+b^{\alpha\beta}_\gamma v^\gamma_x,\quad g^{\alpha\beta},b^{\alpha\beta}_\gamma\in\mbC[[v^1,\ldots,v^N]],\quad \frac{\d^2 g^{\alpha\beta}}{(\d v^1)^2}=\frac{\d b^{\alpha\beta}_\gamma}{\d v^1}=0,\quad b^{\alpha\beta}_1\in\mbC.
$$
Then 
$$
(K_u^{\alpha\beta})^\pol=\left.\lb K^{\alpha\beta}+L_\mu^1(P^\alpha)\circ\eta^{\mu\beta}\d_x+L_\mu(P^\alpha)\circ b^{\mu\beta}_1+b^{\alpha\nu}_1\circ L_\nu(P^\beta)^\dagger\rb\right|_{v^\gamma_n\mapsto u^\gamma_n},
$$
where $\eta^{\alpha\beta}:=\frac{\d g^{\alpha\beta}}{\d v^1}$ and $P^\alpha:=\lb v^1_x (u^\alpha(v^*_*,\eps)-v^\alpha)\rb^\pol$.
\end{lemma}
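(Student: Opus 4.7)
The plan is to apply the matrix differential operator transformation formula directly and extract the polynomial part, exploiting the purely singular nature of the Miura transformation and the restricted form of $K$.

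I would first write $u^\alpha(v^*_*, \eps) = v^\alpha + S^\alpha$, so that the defining conditions give $(S^\alpha)^\pol = 0$ and $\frac{\d S^\alpha}{\d v^1} = 0$. Expanding $S^\alpha = \sum_{i \ge 1}\frac{S^\alpha_i}{(v^1_x)^i}$ with $\frac{\d S^\alpha_i}{\d v^1_x} = 0$, setting $P^\alpha = S^\alpha_1$, and extracting coefficient by coefficient in $(v^1_x)^{-i}$ from $\frac{\d S^\alpha}{\d v^1}=0$, I obtain the extra relation $\frac{\d P^\alpha}{\d v^1}=0$ (and $\frac{\d S^\alpha_i}{\d v^1}=0$ for all $i$). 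By Lemma~\ref{lemma:singular-1}(2) the substitution $v \mapsto v(u)$ is itself purely singular, and by Lemma~\ref{lemma:singular-1}(1) the polynomial part of a $v^1_x$-free differential polynomial commutes with it, so it suffices to compute the polynomial part on the $v$-side and then rename $v^\gamma_n \mapsto u^\gamma_n$ at the end.

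Next I would expand the transformation formula
$$K_u^{\alpha\beta} = \sum_{p,q \ge 0}\left(\delta^\alpha_\mu\delta_{p,0}+\frac{\d S^\alpha}{\d v^\mu_p}\right)\d_x^p\circ K^{\mu\nu}\circ(-\d_x)^q\circ\left(\delta^\beta_\nu\delta_{q,0}+\frac{\d S^\beta}{\d v^\nu_q}\right)$$
into four groups. The $(\delta,\delta)$ group is exactly $K^{\alpha\beta}$. The bilinear $(\d S,\d S)$ group contributes nothing polynomially: each $\d S$-factor has minimum $v^1_x$-exponent $\le -1$ (the extra relation $\frac{\d P^\alpha}{\d v^1}=0$ prevents $\d_x$ applied to $\frac{P^\alpha}{v^1_x}$ from producing polynomial terms), while the sandwiched $K^{\mu\nu}$ can contribute at most one factor of $v^1_x$ per coefficient (from $\d_x g^{\mu\nu} = \eta^{\mu\nu}v^1_x + (v^1_x\text{-free})$ or from $b^{\mu\nu}_1 v^1_x$, using the hypotheses $\frac{\d^2 g^{\mu\nu}}{(\d v^1)^2}=0$, $\frac{\d b^{\mu\nu}_\gamma}{\d v^1}=0$, $b^{\mu\nu}_1\in\mbC$). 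Two units of singularity minus at most one cancelling $v^1_x$ leaves net singularity $\le -1$, hence no polynomial contribution.

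The computational core is the analysis of the two linear groups. In the $(\d S,\delta)$ group $\sum_p \frac{\d S^\alpha}{\d v^\mu_p}\d_x^p\circ K^{\mu\beta}$, only the leading term $\frac{P^\alpha}{v^1_x}$ of $S^\alpha$ can produce polynomial output. Splitting $K^{\mu\beta} = \eta^{\mu\beta}v^1\d_x + (v^1\text{-free})\d_x + b^{\mu\beta}_1 v^1_x + \sum_{\gamma\ne 1}b^{\mu\beta}_\gamma v^\gamma_x$, the $v^1$-free pieces contribute nothing; the $b^{\mu\beta}_1 v^1_x$ piece, via the $k=0$ term in the Leibniz expansion $\d_x^p\circ v^1_x = \sum_k\binom{p}{k}v^1_{k+1}\d_x^{p-k}$, yields $L_\mu(P^\alpha)\circ b^{\mu\beta}_1$; and the $\eta^{\mu\beta}v^1\d_x$ piece, via the $i=1$ term in the inner expansion $\d_x^j(\eta^{\mu\beta}v^1) = \sum_i\binom{j}{i}(\d_x^{j-i}\eta^{\mu\beta})v^1_i$, produces a double sum that reassembles, through the combinatorial identity $\binom{p}{j}j = p\binom{p-1}{j-1}$, into precisely $L^1_\mu(P^\alpha)\circ\eta^{\mu\beta}\d_x$. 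For the $(\delta,\d S)$ group $\sum_q K^{\alpha\nu}(-\d_x)^q\frac{\d S^\beta}{\d v^\nu_q}$ the situation is asymmetric: the $v^1$ in $\eta^{\alpha\nu}v^1\d_x$ sits as an outer multiplicative factor, and multiplying a polynomial ($v^1$) by a purely singular expression remains purely singular, so this piece contributes nothing polynomially; only the $b^{\alpha\nu}_1 v^1_x$ piece survives and yields $b^{\alpha\nu}_1\circ L_\nu(P^\beta)^\dagger$ by a computation dual to the $b^{\mu\beta}_1 v^1_x$ case.

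The main obstacle I anticipate is the combinatorial bookkeeping of the Leibniz expansions for the $\eta^{\mu\beta}v^1\d_x$ term, specifically verifying that the double-sum contribution reassembles cleanly into $L^1_\mu(P^\alpha)\circ\eta^{\mu\beta}\d_x$ even when $\eta^{\mu\beta}$ depends nontrivially on $v^2,\ldots,v^N$ (so that $\d_x^{j-i}\eta^{\mu\beta}$ acquires genuinely different coefficient structure with $j$). The three structural hypotheses on $K$ are essential and tight: $\frac{\d^2 g^{\alpha\beta}}{(\d v^1)^2}=0$ bounds the multiplicity of $v^1_x$ factors that can be generated by differentiating $g^{\alpha\beta}$, $\frac{\d b^{\alpha\beta}_\gamma}{\d v^1}=0$ prevents extra $v^1_x$ from derivatives of $b^{\alpha\beta}_\gamma$, and $b^{\alpha\beta}_1\in\mbC$ keeps the $b_1 v^1_x$ piece from producing internal Leibniz contributions. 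Any relaxation would introduce polynomial terms beyond the four listed in the stated formula.
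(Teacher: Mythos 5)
Your proposal is correct and follows essentially the same route as the paper's proof: reduce via Lemma \ref{lemma:singular-1} to computing the polynomial part of the conjugated operator on the $v$-side, split into the four groups, kill the bilinear group and the $v^1$-free/left-multiplied pieces by singularity counting, and extract the three correction terms from the leading $\frac{P^\alpha}{v^1_x}$ part of $S^\alpha$ (your Laurent expansion in $(v^1_x)^{-1}$ plays the role of the paper's ideal $\hcA^{\rtt,\sing}_v$, and the identity $\binom{p}{k}k=p\binom{p-1}{k-1}$ is exactly the reassembly used there). The only detail you gloss is the final renaming for the coefficient $b^{\alpha\beta}_1 v^1_x$ of $K^{\alpha\beta}$ itself, which is not $v^1_x$-free; it goes through because $b^{\alpha\beta}_1\in\mbC$ and the inverse transformation is purely singular, so $\lb\d_x v^1(u^*_*,\eps)\rb^\pol=u^1_x$.
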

\begin{proof}
Define $\tu^\alpha(v^*_*,\eps):=u^\alpha(v^*_*,\eps)-v^\alpha$. We have $K_u^{\alpha\beta}=\left.\tK^{\alpha\beta}\right|_{v^\gamma_c\mapsto\d_x^c v^\gamma(u^*_*,\eps)}$, where
$$
\tK^{\alpha\beta}:=\lb\delta^\alpha_\mu+\sum_{p\ge 0}\frac{\d\tu^\alpha(v^*_*,\eps)}{\d v^\mu_p}\d_x^p\rb\circ\lb g^{\mu\nu}\d_x+b^{\mu\nu}_\xi v^\xi_x\rb\circ\lb\delta^\beta_\nu+\sum_{q\ge 0}(-\d_x)^q\circ\frac{\d\tu^\beta(v^*_*,\eps)}{\d v^\nu_q}\rb.
$$
By Lemma~\ref{lemma:singular-1}, we have $\lb\left.\tK^{\alpha\beta}\right|_{v^\gamma_c\mapsto\d_x^c v^\gamma(u^*_*,\eps)}\rb^\pol=\lb\left.(\tK^{\alpha\beta})^\pol\right|_{v^\gamma_c\mapsto\d_x^c v^\gamma(u^*_*,\eps)}\rb^\pol$.
Let us now compute~$(\tK^{\alpha\beta})^\pol$.

\medskip

We have
\begin{align*}
\tK^{\alpha\beta}=&g^{\alpha\beta}\d_x+b^{\alpha\beta}_\xi v^\xi_x+\underbrace{\lb g^{\alpha\nu}\d_x+b^{\alpha\nu}_\xi v^\xi_x\rb\circ\sum_{q\ge 0}(-\d_x)^q\circ\frac{\d\tu^\beta(v^*_*,\eps)}{\d v^\nu_q}}_{A:=}\\
&+\underbrace{\sum_{p\ge 0}\frac{\d\tu^\alpha(v^*_*,\eps)}{\d v^\mu_p}\d_x^p\circ\lb g^{\mu\beta}\d_x+b^{\mu\beta}_\xi v^\xi_x\rb}_{B:=}\\
&+\underbrace{\lb\sum_{p\ge 0}\frac{\d\tu^\alpha(v^*_*,\eps)}{\d v^\mu_p}\d_x^p\rb\circ\lb g^{\mu\nu}\d_x+b^{\mu\nu}_\xi v^\xi_x\rb\circ\lb\sum_{q\ge 0}(-\d_x)^q\circ\frac{\d\tu^\beta(v^*_*,\eps)}{\d v^\nu_q}\rb}_{C:=}.
\end{align*}
Denote 
$$
\hcA^{\rtt,\sing}_v:=\left\{f\in\hcA^\rtt_v\left|f^\pol=0,\frac{\d f}{\d v^1}=0\right.\right\}\subset\hcA^\rtt_v.
$$
This subspace is closed under the multiplication and also satisfies $\d_x \hcA^{\rtt,\sing}_v\subset\hcA^{\rtt,\sing}_v$. Note that
$$
\tu^\alpha(v^*_*,\eps)-\frac{P^\alpha}{v^1_x}\in\frac{1}{v^1_x}\hcA^{\rtt,\sing}_v,\quad\frac{\d P^\alpha}{\d v^1}=\frac{\d P^\alpha}{\d v^1_x}=0,
$$
which implies that
\begin{gather}\label{eq:initial term of tu}
\d_x^k\frac{\d\tu^\alpha(v^*_*,\eps)}{\d v^\mu_p}-\frac{\d_x^k\frac{\d P^\alpha}{\d v^\mu_p}}{v^1_x}\in\frac{1}{v^1_x}\hcA^{\rtt,\sing}_v,\quad k\ge 0.
\end{gather}
Let us now compute $A^\pol$, $B^\pol$, and $C^\pol$.

\medskip

Using~\eqref{eq:initial term of tu}, we see that
\begin{align*}
&\left(g^{\alpha\nu}\d_x\circ\sum_{q\ge 0}(-\d_x)^q\circ\frac{\d\tu^\beta(v^*_*,\eps)}{\d v^\nu_q}\right)^\pol=0,\\
&\left(b^{\alpha\nu}_\xi v^\xi_x\circ\sum_{q\ge 0}(-\d_x)^q\circ\frac{\d\tu^\beta(v^*_*,\eps)}{\d v^\nu_q}\right)^\pol=\left(b^{\alpha\nu}_\xi v^\xi_x\circ\frac{1}{v^1_x}\sum_{q\ge 0}(-\d_x)^q\circ\frac{\d P^\beta}{\d v^\nu_q}\right)^\pol=b^{\alpha\nu}_1\circ L_\nu(P^\beta)^\dagger.
\end{align*}
Therefore, $A^\pol=b^{\alpha\nu}_1\circ L_\nu(P^\beta)^\dagger$. Using the properties~$\frac{\d^2 g^{\alpha\beta}}{(\d v^1)^2}=\frac{\d b^{\alpha\beta}_\gamma}{\d v^1}=0$ and~\eqref{eq:initial term of tu}, we obtain
$$
B^\pol=\left(\frac{1}{v^1_x}L_\mu(P^\alpha)\circ(\eta^{\mu\beta}v^1\d_x+b^{\mu\beta}_1 v^1_x)\right)^\pol=L_\mu^1(P^\alpha)\circ\eta^{\mu\beta}\d_x+L_\mu(P^\alpha)\circ b^{\mu\beta}_1,
$$
and $C^\pol=0$.

\medskip

Thus, we have
\begin{align*}
&\lb K^{\alpha\beta}_u\rb^\pol=b^{\alpha\beta}_1 \lb\d_x v^1(u^*_*,\eps)\rb^\pol\\
&+\lb\left.\Big(\underline{g^{\alpha\beta}\d_x+\sum_{\xi\ne 1}b^{\alpha\beta}_\xi v^\xi_x+L_\mu^1(P^\alpha)\circ\eta^{\mu\beta}\d_x+L_\mu(P^\alpha)\circ b^{\mu\beta}_1+b^{\alpha\nu}_1\circ L_\nu(P^\beta)^\dagger}\Big)\right|_{v^\gamma_c\mapsto \d_x^c v^\gamma(u^*_*,\eps)}\rb^\pol,
\end{align*}
where used that $b^{\alpha\beta}_1\in\mbC$. The first term on the right-hand side of this equality is equal to
$b^{\alpha\beta}_1 u^1_x(u^*_*,\eps)$. In the second term, note that the derivative by $v^1_x$ of each coefficient of the underlined operator is equal to~$0$, so by Lemma~\ref{lemma:singular-1} the second term is equal to 
$$
\left.\lb g^{\alpha\beta}\d_x+\sum_{\xi\ne 1}b^{\alpha\beta}_\xi v^\xi_x+L_\mu^1(P^\alpha)\circ\eta^{\mu\beta}\d_x+L_\mu(P^\alpha)\circ b^{\mu\beta}_1+b^{\alpha\nu}_1\circ L_\nu(P^\beta)^\dagger\rb\right|_{v^\gamma_n\mapsto u^\gamma_n},
$$
which completes the proof of the lemma.
\end{proof}

\medskip

\subsection{Relation with the string and the dilaton equations}

Consider formal variables $t^\alpha_a$, $\alpha\in[N]$, $a\in\mbZ_{\ge 0}$. For $d\in\mbZ_{\ge 0}$, denote by $\mbC[[t^*_*]]^{(d)}\subset\mbC[[t^*_*]]$ the ideal consisting of infinite linear combinations of monomials $\prod t^{\alpha_i}_{d_i}$ satisfying $\sum d_i\ge d$. 

\medskip

Consider formal power series $\tv^\alpha\in\mbC[[t^*_*]]$, $\alpha\in[N]$, satisfying 
\begin{gather}\label{eq:conditions for tv}
\left.\tv^\alpha\right|_{t^*_{\ge 1}=0}=t^\alpha_0,\qquad\underbrace{\left(\frac{\d}{\d t^1_0}-\sum_{k\ge 0}t^\alpha_{k+1}\frac{\d}{\d t^\alpha_k}\right)}_{\cS:=}\tv^\alpha=\delta^{\alpha,1},\qquad
\underbrace{\left(\frac{\d}{\d t^1_1}-\sum_{k\ge 0}t^\alpha_k\frac{\d}{\d t^\alpha_k}\right)}_{\cD:=}\tv^\alpha=0.
\end{gather}
The last two equations are usually called the \emph{string} and the \emph{dilaton} equation, respectively. Denote $\tv^\alpha_n:=\frac{\d^n\tv^\alpha}{(\d t^1_0)^n}$. From the string equation it follows that
$$
\tv^\alpha_n=\delta_{n,1}\delta^{\alpha,1}+\left(\sum_{k\ge 0}t^\gamma_{k+1}\frac{\d}{\d t^\gamma_k}\right)^n\tv^\alpha,
$$
and therefore
\begin{gather}\label{eq:property of tv}
\tv^\alpha_n-t^\alpha_n-\delta_{n,1}\delta^{\alpha,1}\in\mbC[[t^*_*]]^{(n+1)}.
\end{gather}
This implies that for any element $P\in\cA^\wk_v$ the substitution $P|_{v^\gamma_n\mapsto \tv^\gamma_n}\in\mbC[[t^*_*]]$ is well defined and, moreover, it defines an algebra isomorphism $\cA^\wk_v\to\mbC[[t^*_*]]$. We will say that this isomorphism is given by the $N$-tuple of formal power series $(\tv^1,\ldots,\tv^N)$. The preimage of an element $f\in\mbC[[t^*_*]]$ under this isomorphism will be denoted by~$f(v^*_*)$. 

\medskip

Obviously, $\frac{\d}{\d t^1_0}\tv^\alpha_n=\tv^\alpha_{n+1}$. Also, from the string and the dilaton equations it follows that $\cS(\tv^\alpha_n)=\delta^{\alpha,1}\delta_{n,0}$ and $\cD(\tv^\alpha_n)=n \tv^\alpha_n$. This implies that under the isomorphism $\cA^\wk_v\cong\mbC[[t^*_*]]$ the operators $\frac{\d}{\d t^1_0}$, $\cS$, and $\cD$ on $\mbC[[t^*_*]]$ correspond to the operator $\d_x$, $\frac{\d}{\d v^1_0}$, and $\cD_v:=\sum_{k\ge 0}k v^\gamma_k\frac{\d}{\d v^\gamma_k}$ on $\cA^\wk_v$, respectively.

\medskip

\begin{lemma}\label{lemma:leading term of rational function}
Consider a formal power series $f\in\mbC[[t^*_*]]$ satisfying $\cD(f)=d\cdot f$, $d\in\mbZ$.
\begin{enumerate}
\item We have $f(v^*_*)\in\cA^\rt_{v;d}$.

\smallskip

\item Suppose $f\in\mbC[[t^*_*]]^{(k)}$ for some $k\in\mbZ_{\ge 0}$. Let us decompose $f=\sum_{i\ge k}h_i$, where $h_i\in\mbC[[t^*_*]]$ is homogeneous of degree $i$, if we assign to $t^\gamma_c$ degree $c$. Then $f(v^*_*)$ has the form
$$
f(v^*_*)=\sum_{i\ge k}P_i(v^1_x)^{d-i},\quad P_i\in\cA_{v;i},\quad\frac{\d P_i}{\d v^1_x}=0,
$$
where $P_{k}=\left.h_k\right|_{t^\gamma_n\mapsto v^\gamma_n}$.
\end{enumerate}
\end{lemma}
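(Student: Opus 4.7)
\medskip

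The plan is to prove Part 2 directly; Part 1 then follows as the $k=0$ specialization, since $\mbC[[t^*_*]]^{(0)}=\mbC[[t^*_*]]$ and the Part 2 conclusion $f(v^*_*)=\sum_{i\ge 0}P_i(v^1_x)^{d-i}$ is precisely the defining form of $\cA^\rt_{v;d}$.

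First I would pass to shifted variables $s^\alpha_n:=t^\alpha_n-\delta^{\alpha,1}\delta_{n,1}$. A short calculation using $t^1_1=s^1_1+1$ yields $\cD=-\sum_{\alpha,k}s^\alpha_k\frac{\d}{\d s^\alpha_k}$ on $\mbC[[t^*_*]]$, so $\cD(f)=d\cdot f$ just says that every $s$-monomial in $f$ has length exactly $-d$ (forcing $d\le 0$), and $f=\sum_M c_M s^M$ with $M=((\alpha_1,n_1),\dots,(\alpha_{-d},n_{-d}))$ running over length-$(-d)$ sequences of indices. Under the algebra isomorphism $\cA^\wk_v\cong\mbC[[t^*_*]]$, each $s^\alpha_n$ is the image of a unique $S^\alpha_n\in\cA^\wk_{v;-1}$, so by multiplicativity $f(v^*_*)=\sum_M c_M S^M$ with $S^M:=\prod_i S^{\alpha_i}_{n_i}$.

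The key step is to identify each $S^\alpha_n$ inside $\cA^\rt_{v;-1}$ together with its leading $v^1_x$-power, which should match Part 2 applied to $s^\alpha_n$ itself. For the base case $n=0$: the rational element $\frac{v^\alpha}{v^1_x}\in\cA^\rt_{v;-1}$ is sent by the isomorphism to $\frac{\tv^\alpha}{\tv^1_x}$, and the initial conditions $\tv^\alpha|_{t^*_{\ge 1}=0}=t^\alpha_0$, $\tv^1_x|_{t^*_{\ge 1}=0}=1$ give $\frac{\tv^\alpha}{\tv^1_x}-t^\alpha_0\in\mbC[[t^*_*]]^{(1)}$, hence $S^\alpha_0=\frac{v^\alpha}{v^1_x}+(\text{correction whose image lies in }\mbC[[t^*_*]]^{(1)})$. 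For $n\ge 1$: the string-equation identity $\cS(s^\alpha_n)=-s^\alpha_{n+1}$ and the identification $\cS\leftrightarrow\frac{\d}{\d v^1_0}$ yield the recursion $S^\alpha_{n+1}=-\frac{\d}{\d v^1_0}S^\alpha_n$. Since $\frac{\d}{\d v^1_0}$ preserves $\cA^\rt_{v;-1}$ (it commutes with multiplication by $v^1_x$), induction on $n$ propagates the leading-term identification, giving $S^\alpha_n=v^\alpha_n(v^1_x)^{-1-n}+(\text{smaller }v^1_x\text{-powers})$ for $(\alpha,n)\ne(1,1)$, and $S^1_1=-(v^1_x)^{-1}+(\text{smaller})$.

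The extension to general $f=\sum_M c_M s^M$ then follows multiplicatively: $S^M$ has leading $v^1_x$-power and coefficient equal to the product of the leading terms of its factors, and grouping by $v^1_x$-power yields $f(v^*_*)=\sum_{i\ge k}P_i(v^1_x)^{d-i}$; the identification $P_k=h_k|_{t^\gamma_n\mapsto v^\gamma_n}$ then holds because the $s$-monomials contributing to the maximal-power term $(v^1_x)^{d-k}$ are precisely those whose leading $t$-expansion has $d$-degree $k$, i.e.\ those contributing to $h_k$. The main obstacle I anticipate is the bridging step between the two filtrations---establishing inductively that the isomorphism sends an element of $\cA^\wk_{v;-1}$ with maximum $v^1_x$-power $\le -1-N$ into $\mbC[[t^*_*]]^{(N)}$. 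This compatibility, which upgrades the string-equation recursion into the full leading-term statement, is ultimately controlled by property~\eqref{eq:property of tv} describing how $\tv^\alpha_n$ deviates from $t^\alpha_n+\delta^{\alpha,1}\delta_{n,1}$.
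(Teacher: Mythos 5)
Your reduction collapses at its first step. You correctly compute $\cD=-\sum s^\alpha_k\frac{\d}{\d s^\alpha_k}$ in the shifted variables $s^\alpha_n=t^\alpha_n-\delta^{\alpha,1}\delta_{n,1}$, but you then conclude that a $\cD$-eigenvector $f\in\mbC[[t^*_*]]$ is a sum of genuine monomials $s^M$ of length $-d$, "forcing $d\le 0$". This is false: an element of $\mbC[[t^*_*]]$ cannot in general be re-expanded as a power series in the $s$-variables (the shift $t^1_1=s^1_1+1$ moves the expansion point), and $\cD(f)=d\cdot f$ is not a homogeneity statement but a recursion $\frac{\d}{\d t^1_1}f_{j+1}=(d+j)f_j$ linking the length-homogeneous components $f_j$ of $f$. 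Concretely, $\frac{1}{1-t^1_1}$ is a $\cD$-eigenvector with $d=1>0$, and the lemma is applied in the paper to eigenvectors with arbitrarily large positive $d$ (for instance $\cD(\tv^\alpha_n)=n\,\tv^\alpha_n$, and in the proof of Lemma~\ref{lemma:singular-3} to $\Coef_{\eps^{2g}}u^{\str;\alpha}$ with $d=2g$). So the monomial decomposition $f=\sum_M c_M s^M$ on which your whole multiplicative argument rests does not exist.

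Second, even for the generators your induction does not close. From $S^\alpha_{n+1}=-\frac{\d}{\d v^1}S^\alpha_n$ and the hypothesis $S^\alpha_n=v^\alpha_n(v^1_x)^{-1-n}+P_{n+1}(v^1_x)^{-2-n}+\cdots$, the claimed leading term $v^\alpha_{n+1}(v^1_x)^{-2-n}$ of $S^\alpha_{n+1}$ equals $-\frac{\d P_{n+1}}{\d v^1}$: it comes from the subleading coefficient $P_{n+1}$, about which the induction hypothesis says nothing (for $n\ge 1$ the derivative kills the leading term $v^\alpha_n$ outright). The "bridging step" you flag at the end --- compatibility between the $v^1_x$-power filtration on $\cA^\rt_{v;d}$ and the filtration $\mbC[[t^*_*]]^{(k)}$ --- is not a technical loose end but essentially the entire content of the lemma, and it is left unproved. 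The paper establishes it directly by a peeling-off recursion: set $f_0=f$, let $Q_i$ be the degree-$(k+i-1)$ Taylor part of $f_{i-1}$ at $t^*_*=0$ (the $\cD$-homogeneity forces $\frac{\d Q_i}{\d v^1_x}=0$), subtract $\lb(v^1_x)^{d+1-k-i}Q_i\rb|_{v^\gamma_c\mapsto\tv^\gamma_c}$, and use property~\eqref{eq:property of tv} to see that the remainder $f_i$ lies in $\mbC[[t^*_*]]^{(k+i)}$; iterating yields both parts of the lemma at once.
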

\begin{proof}
We proceed in the same way as in the proof of Proposition~7.6 from the paper~\cite{BDGR20}. Suppose $f\in\mbC[[t^*_*]]^{(k)}$ for some $k\in\mbZ_{\ge 0}$. We construct a sequence of pairs $(f_i,Q_i)\in\mbC[[t^*_*]]^{(k+i)}\times\cA_{v;k+i-1}$, $i\ge 0$, with $f_0=f$ and $Q_0=0$, by the following recursive formulas:
\begin{align*}
&Q_i:=\sum_{n\ge 0}\sum_{\substack{\alpha_1,\ldots,\alpha_n\in[N]\\d_1\,\ldots,d_n\in\mbZ_{\ge 0}\\\sum d_j=k+i-1}}\left.\frac{\d^n f_{i-1}}{\d t^{\alpha_1}_{d_1}\cdots\d t^{\alpha_n}_{d_n}}\right|_{t^*_*=0}\frac{\prod v^{\alpha_j}_{d_j}}{n!},\quad f_i:=f_{i-1}-\left.\lb(v^1_x)^{d+1-k-i}Q_i\rb\right|_{v^\gamma_c\mapsto\tv^\gamma_c},\quad i\ge 1.
\end{align*}
Let us check that $\cD(f_i)=d\cdot f_i$ and $f_i\in\mbC[[t^*_*]]^{(k+i)}$ by the induction on $i$. Indeed, if $\cD(f_{i-1})=d\cdot f_{i-1}$ and $f_{i-1}\in\mbC[[t^*_*]]^{(k+i-1)}$, then we immediately obtain that $\left.\frac{\d^n f_{i-1}}{\d t^{\alpha_1}_{d_1}\cdots\d t^{\alpha_n}_{d_n}}\right|_{t^*_*=0}=0$ if $\sum d_j=k+i-1$ and $(\alpha_l,d_l)=(1,1)$ for some $l\in[n]$. Therefore, $\frac{\d Q_i}{\d v^1_x}=0$, and the property~\eqref{eq:property of tv} implies that $f_i\in\mbC[[t^*_*]]^{(k+i)}$. Using the formula for~$f_i$, we also see that $\cD(f_i)=d\cdot f_i$. From the above construction, we obtain
$$
f=\left.\sum_{i\ge 1}Q_i(v^1_x)^{d+1-k-i}\right|_{v^\gamma_c\mapsto\tv^\gamma_c},\quad Q_i\in\cA_{v;i+k-1},\quad\frac{\d Q_i}{\d v^1_x}=0,
$$
which implies the first part of the lemma. The second part follows from the construction of~$Q_1$.
\end{proof}

\medskip


\section{Dubrovin--Zhang hierarchy}\label{section:DZ hierarchy}

Consider a semisimple CohFT. Its \emph{correlators} are defined by
$$
\<\tau_{d_1}(v_1)\cdots\tau_{d_n}(v_n)\>_g:=\int_{\oM_{g,n}}c_{g,n}(\otimes_{i=1}^n v_i)\prod_{i=1}^n\psi_i^{d_i},\quad v_1,\ldots,v_n\in V,\quad d_1,\ldots,d_n\in\mbZ_{\ge 0}.
$$
Choosing a basis $e_1,\ldots,e_N\in V$ such that $e_1=e$, consider the \emph{potential} of our CohFT:
\begin{gather*}
\cF(t^*_*,\eps)=\sum_{g\ge 0}\eps^{2g}\cF_g(t^*_*):=\sum_{g,n\ge 0}\frac{\eps^{2g}}{n!}\sum_{\substack{\alpha_1,\ldots,\alpha_n\in[N]\\d_1,\ldots,d_n\in\mbZ_{\ge 0}}}\<\prod_{i=1}^n\tau_{d_i}(e_{\alpha_i})\>_g\prod_{i=1}^n t^{\alpha_i}_{d_i}\in\mbC[[t^*_*,\eps]].
\end{gather*}
The potential $\mcF$ satisfies the \emph{string} and the \emph{dilaton} equations:
\begin{gather*}
\cS(\cF)=\frac{1}{2}\eta_{\alpha\beta}t^\alpha_0 t^\beta_0+\eps^2\<\tau_0(e_1)\>_1,\qquad \cD(\cF)=\eps\frac{\d\mcF}{\d\eps}-2\mcF+\eps^2\frac{N}{24}.
\end{gather*}

\medskip

Define local functionals
$$
\oh_{\alpha,d}^{\DZ;[0]}:=\int\left.\frac{\d\mcF_0}{\d t^\alpha_d}\right|_{t^\gamma_n\mapsto\delta_{n,0}v^\gamma}dx\in\Lambda_{v;0},\quad\alpha\in[N],\quad d\in\mbZ_{\ge 0}.
$$
We consider the matrix differential operator $\eta^{-1}\d_x$ as a Poisson operator from $\mathcal{PO}_v$. The \emph{principal hierarchy} is the Hamiltonian hierarchy defined by the Poisson operator $\eta^{-1}\d_x$ and the local functionals $\oh^{\DZ;[0]}_{\alpha,d}$. The principal hierarchy is integrable (see, e.g.,~\cite[Proposition~7]{BPS12}). We additionally equip the principal hierarchy with the local functionals $\oh_{\alpha,-1}^{\DZ;[0]}:=\int\eta_{\alpha\nu}v^\nu dx$, $\alpha\in[N]$, which are Casimirs of the Poisson bracket $\{\cdot,\cdot\}_{\eta^{-1}\d_x}$. From the string equation for~$\mcF$, it follows that $\oh^{\DZ;[0]}_{1,0}=\int\frac{1}{2}\eta_{\alpha\beta}v^\alpha v^\beta dx$, so the flow $\frac{\d}{\d t^1_0}$ of the principal hierarchy is given by $\frac{\d v^\alpha}{\d t^1_0}=v^\alpha_x$.

\medskip

For $\alpha\in[N]$ and $n\in\mbZ_{\ge 0}$, introduce formal power series
$$
w^{\top;\alpha}:=\eta^{\alpha\mu}\frac{\d^2\cF}{\d t^\mu_0\d t^1_0},\quad w^{\top;\alpha}_n:=\frac{\d^n w^{\top;\alpha}}{(\d t^1_0)^n},\quad v^{\top;\alpha}:=w^{\top;\alpha}|_{\eps=0},\quad v^{\top;\alpha}_n:=w^{\top;\alpha}_n|_{\eps=0}.
$$
From the string and the dilaton equations for~$\cF$, it follows that $\tv^\alpha=v^{\top;\alpha}$ satisfies the conditions in~\eqref{eq:conditions for tv}. So we consider the isomorphism $\hcA^\wk_v\cong\mbC[[t^*_*,\eps]]$ given by the $N$-tuple $\ov^\top:=(v^{\top;1},\ldots,v^{\top;N})$. Note also that $\ov^\top$ is a solution of the principal hierarchy~(see, e.g.,~\cite[Section~3.1.2]{BPS12}).

\medskip

Let us consider an element $w^\alpha(v^*_*,\eps):=w^\top(v^*_*,\eps)\in\hcA^\wk_v$. Clearly, $w^\alpha(v^*_*,\eps)=v^\alpha+O(\eps)$. In~\cite[Proposition~7.6]{BDGR20}, the authors proved that $w^\alpha(v^*_*,\eps)\in\hcA^\rtt_{v;0}$, so the change of variables $v^\alpha\mapsto w^\alpha(v^*_*,\eps)$ is a rational Miura transformation.

\medskip

Consider the local functionals $\oh_{\alpha,d}^{\DZ}:=\oh_{\alpha,d}^{\DZ;[0]}[w]\in\hLambda^\rtt_{w;0}$, $\alpha\in[N]$, $d\in\mbZ_{\ge -1}$, and the Poisson operator $K^\DZ_{(1)}:=(\eta^{-1}\d_x)_w\in\mathcal{PO}^\rtt_w$. In~\cite[Section~7]{BPS12}, the authors proved these local functionals and the Poisson operator are actually polynomial:
$$
\oh_{\alpha,d}^{\DZ}\in\hLambda_{w;0},\qquad K^\DZ_{(1)}\in\mathcal{PO}_w.
$$
Since $\mcF_g(v^*_*)\in\cA^\wkt_v$ \cite[Proposition~4]{BPS12} for $g\ge 1$, we have 
$$
\Coef_{\eps^{2g}}w^\alpha(v^*_*,\eps)=\eta^{\alpha\nu}\frac{\d^2\cF_g}{\d t^1_0\d t^\nu_0}(v^*_*)=\d_x\left\{\cF_g(v^*_*),\eta^{\alpha\nu}\oh_{\nu,0}^{\DZ;[0]}\right\}_{\eta^{-1}\d_x}\in\Im\{\d_x\colon\cA^\wkt_v\to\cA^\wkt_v\},\, g\ge 1.
$$
Therefore, $\oh^{\DZ}_{\alpha,-1}=\int\eta_{\alpha\nu}w^\nu dx$.

\medskip

\begin{definition}
The Hamiltonian hierarchy defined by the Poisson operator $K^\DZ_{(1)}$ and the local functionals $\oh^\DZ_{\alpha,d}$, $\alpha\in[N]$, $d\in\mbZ_{\ge 0}$, is called the \emph{Dubrovin--Zhang hierarchy}.
\end{definition}

\medskip

Since $\left\{\oh^{\DZ;[0]}_{\alpha_1,d_1},\oh^{\DZ;[0]}_{\alpha_2,d_2}\right\}_{\eta^{-1}\d_x}=0$, we have $\left\{\oh^{\DZ}_{\alpha_1,d_1},\oh^{\DZ}_{\alpha_2,d_2}\right\}_{K^\DZ_{(1)}}=0$ and so the Dubrovin--Zhang hierarchy is integrable. The flow $\frac{\d}{\d t^1_0}$ of the Dubrovin--Zhang hierarchy is given by $\frac{\d w^\alpha}{\d t^1_0}=w^\alpha_x$. The $N$-tuple $\lb w^{\top;1},\ldots,w^{\top;N}\rb$ is a solution of the Dubrovin--Zhang hierarchy. It is called the \emph{topological solution}.

\medskip

Suppose now that our CohFT is homogeneous. Let us choose a homogeneous basis $e_1,\ldots,e_N\in V$ with $e_1=e$, $\deg(e_\alpha)=q_\alpha$. Define
$$
b^{\alpha\beta}:=\left.\eta^{\alpha\mu}\eta^{\beta\nu}\frac{\d^2 \cF_0}{\d t^\mu_0\d t^\nu_0}\right|_{t^\theta_k\mapsto \delta_{k,0}v^\theta}\in\mbC[[v^1,\ldots,v^N]].
$$
Define a matrix differential operator $K^{\DZ;[0]}_{(2)}=\lb K^{\DZ;[0],\alpha\beta}_{(2)}\rb$ by
\begin{gather*}
K^{\DZ;[0],\alpha\beta}_{(2)}:=g^{\alpha\beta}\d_x+\d_x b^{\alpha\beta}\left(\frac{1}{2}-\mu_\beta\right),\quad \text{where}\quad g^{\alpha\beta}:=((1-q_\nu)v^\nu+r^\nu)\frac{\d b^{\alpha\beta}}{\d v^\nu}.
\end{gather*}
This matrix differential operator is Poisson, it is compatible with the Poisson operator $\eta^{-1}\d_x$, and it endows the principal hierarchy with a bihamiltonian structure \cite[Theorem~6.3]{Dub96} (see also \cite[Proposition~2.1 and Remark~2.2]{BRS21}):
\begin{gather*}
\left\{\cdot,\oh^{\DZ;[0]}_{\alpha,d}\right\}_{K_{(2)}^{\DZ;[0]}}=\left(d+\frac{3}{2}+\mu_\alpha\right)\left\{\cdot,\oh^{\DZ;[0]}_{\alpha,d+1}\right\}_{\eta^{-1}\d_x}+A^\beta_\alpha\left\{\cdot,\oh^{\DZ;[0]}_{\beta,d}\right\}_{\eta^{-1}\d_x},\quad d\ge -1.
\end{gather*}

\medskip

In~\cite[Theorem~1.2]{LWZ21}, the authors proved that the Poisson operator $K^\DZ_{(2)}:=K^{\DZ;[0]}_{(2);w}\in\mathcal{PO}^\rtt_w$ is actually polynomial:
$$
K^\DZ_{(2)}\in\mathcal{PO}_w.
$$
Summarizing, the Dubrovin--Zhang hierarchy associated to our semisimple homogeneous CohFT is a polynomial bihamiltonian integrable hierarchy given by the local functionals $\oh_{\alpha,d}^\DZ$ and the Poisson operators $K^\DZ_{(1)}, K^\DZ_{(2)}$ with the bihamiltonian resursion relation
\begin{gather*}
\left\{\cdot,\oh^\DZ_{\alpha,d}\right\}_{K^\DZ_{(2)}}=\left(d+\frac{3}{2}+\mu_\alpha\right)\left\{\cdot,\oh^\DZ_{\alpha,d+1}\right\}_{K^\DZ_{(1)}}+A^\beta_\alpha\left\{\cdot,\oh^\DZ_{\beta,d}\right\}_{K^\DZ_{(1)}},\quad d\ge -1.
\end{gather*} 

\medskip


\section{Proof of Theorem~\ref{theorem:main}}\label{section:proof of the main theorem}

We consider an arbitrary semisimple homogeneous CohFT. It is easy to check that the validity of the theorem doesn't depend on the choice of a homogeneous basis in $V$. Let us choose a homogeneous basis such that $e=e_1$. 

\medskip

Consider the Dubrovin--Zhang hierarchy associated to our CohFT. As Hamiltonian hierarchies, the Dubrovin--Zhang hierarchy and the DR hierarchy are related by a Miura transformation $u^\alpha\mapsto w^\alpha(u^*_*,\eps)$ satisfying $w^\alpha(u^*_*,\eps)=u^\alpha+O(\eps)$ (see \cite[Proposition 7.4]{BDGR18}, \cite[Section~3.4]{BGR19},~and \cite[Theorem~1.2]{BLS24}):
$$
\eta^{-1}\d_x=\lb K^{\DZ}_{(1)}\rb_u,\qquad \og_{\alpha,d}=\oh^\DZ[u],\qquad \alpha\in[N],\,d\in\mbZ_{\ge -1}.
$$
Consider now the operator $\lb K^{\DZ}_{(2)}\rb_u\in\mathcal{PO}_u$. From the discussion in the previous section, we know that this operator is Poisson, it is compatible with the operator $\eta^{-1}\d_x$, and relation~\eqref{eq:bihamiltonian recursion for DR hierarchy} is satisfied if we replace $K^\DR$ by $\lb K^{\DZ}_{(2)}\rb_u\in\mathcal{PO}_u$. Therefore, it is sufficient to prove that
$$
K^\DR=\lb K^{\DZ}_{(2)}\rb_u.
$$ 

\medskip

Since the Dubrovin--Zhang hierarchy is obtained from the principal hierarchy by the rational Miura transformation $v^\alpha\mapsto w^\alpha(v^*_*,\eps)$, we see that we have a rational Miura transformation $v^\alpha\mapsto u^\alpha(v^*_*,\eps)$ relating the principal hierarchy and the DR hierarchy. So it is sufficient to prove the following proposition.

\medskip

\begin{proposition}
We have $K^{\DR}=\lb K^{\DZ;[0]}_{(2)}\rb_u$.
\end{proposition}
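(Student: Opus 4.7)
The plan is to combine three ingredients: (i) factor the Miura transformation $v^\alpha\mapsto u^\alpha(v^*_*,\eps)$ through the DZ variables, (ii) use Lemma~\ref{lemma:singular-3}, which asserts that this composite is purely singular, and (iii) invoke the polynomiality of the second DZ Poisson structure proved in \cite{LWZ21} to force the would-be singular output of Lemma~\ref{lemma:singular-2} to cancel, leaving only a handful of explicit regular terms.

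More precisely, the rational Miura transformation $v^\alpha\mapsto u^\alpha(v^*_*,\eps)$ factors as $v\mapsto w$ (carrying the principal hierarchy to the DZ hierarchy) composed with the polynomial Miura transformation $w\mapsto u$ realizing the DR/DZ equivalence of \cite{BLS24}. Since the second factor is polynomial and $K^{\DZ}_{(2)}\in\mathcal{PO}_w$ by \cite{LWZ21}, the transported operator $(K^{\DZ;[0]}_{(2)})_u = (K^{\DZ}_{(2)})_u$ already lies in $\mathcal{PO}_u$, and hence coincides with its own polynomial part. The proposition therefore reduces to computing this polynomial part.

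Next I would verify that $K^{\DZ;[0]}_{(2)}$ has the shape required by Lemma~\ref{lemma:singular-2}. Rewriting $\d_x\circ b^{\alpha\beta}(\frac{1}{2}-\mu_\beta) = (\frac{1}{2}-\mu_\beta)b^{\alpha\beta}\d_x + (\frac{1}{2}-\mu_\beta)(\d_{v^\gamma}b^{\alpha\beta})v^\gamma_x$, the string equation for $\mathcal{F}_0$ (which implies $\d_{v^1}b^{\alpha\beta}=\eta^{\alpha\beta}$) forces the coefficient of $\d_x$ to be affine in $v^1$, forces the coefficients $b^{\alpha\beta}_\gamma$ to be $v^1$-independent, and makes $b^{\alpha\beta}_1 = (\frac{1}{2}-\mu_\beta)\eta^{\alpha\beta}$ constant. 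Combined with Lemma~\ref{lemma:singular-3}, Lemma~\ref{lemma:singular-2} then produces an explicit closed expression for $(K^{\DZ;[0]}_{(2)})_u^{\pol}$ in terms of the polynomials $P^\alpha := (v^1_x(u^\alpha(v^*_*,\eps)-v^\alpha))^{\pol}$, the constants $\mu_\beta$, the metric $\eta$, and the three-point constants $A^\beta_\alpha$ (which emerge from evaluating $K^{\alpha\beta}|_{v\to u}$, where the leading Taylor coefficients of $b^{\alpha\beta}$ beyond the linear-in-$v^1$ term encode $c_{0,3}$ paired with $\ovr$).

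The main obstacle is the final identification of this closed expression with the alternative form of $K^{\DR}$ recorded in Remark~\ref{remark:more on KDR}. The four operator blocks align naturally: the pair $b^{\alpha\nu}_1\circ L_\nu(P^\beta)^\dagger + L_\mu(P^\alpha)\circ b^{\mu\beta}_1$ should reproduce $(\frac{1}{2}-\mu)\circ\hOmega(\og)\circ\d_x + \d_x\circ\hOmega(\og)\circ(\frac{1}{2}-\mu)$, the term $L^1_\mu(P^\alpha)\circ\eta^{\mu\beta}\d_x$ should deliver $\d_x\circ\hOmega^1(\og)\circ\d_x$, and the remainder of $K^{\alpha\beta}|_{v\to u}$ should produce the $\eta^{-1}A\eta^{-1}\d_x$ correction. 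Each of these matchings reduces to a single geometric identity: after the substitution $v\to u$, the $P^\alpha$ must be expressible in terms of the variational derivatives of $\og$. I expect that identity to follow from the interpretation of $w^{\top;\alpha}$ as a generating function of two-point CohFT correlators, the definition of $\og$ through one-point DR integrals involving $\lambda_g\psi_1$, the relation $\og_{\un,1} = (D-2)\og$, and the transport of Hamiltonians $\og_{\alpha,d} = \oh^{\DZ}_{\alpha,d}[u]$ along $u\mapsto w$. This last identification is where I expect the bulk of the technical work of the argument to reside.
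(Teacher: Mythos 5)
Your architecture is exactly the paper's: factor $v\mapsto u$ through the DZ variables, use polynomiality of $K^{\DZ}_{(2)}$ from \cite{LWZ21} together with the polynomial DR/DZ Miura transformation of \cite{BLS24} to conclude that $\lb K^{\DZ;[0]}_{(2)}\rb_u$ equals its own polynomial part, check that $K^{\DZ;[0]}_{(2)}$ has the shape required by Lemma~\ref{lemma:singular-2}, and then read off the surviving regular terms. (One small caution on the shape check: in $K^{\DZ;[0],\alpha\beta}_{(2)}=g^{\alpha\beta}\d_x+\d_x b^{\alpha\beta}\lb\tfrac12-\mu_\beta\rb$ the second summand is the multiplication operator by $(\d_x b^{\alpha\beta})(\tfrac12-\mu_\beta)$, not a composition with $\d_x$; your rewriting treats it as a composition, which changes the coefficient of $\d_x$ and hence the constant $\eta^{\alpha\beta}=\d g^{\alpha\beta}/\d v^1$ entering Lemma~\ref{lemma:singular-2}. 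With the correct reading one gets $b^{\alpha\beta}_\gamma=(\tfrac12-\mu_\beta)\d_{v^\gamma}b^{\alpha\beta}$ and the lemma's $\eta^{\alpha\beta}$ is literally the metric.)

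The genuine gap is the step you explicitly defer: identifying $P^\alpha=\lb v^1_x(u^\alpha(v^*_*,\eps)-v^\alpha)\rb^\pol$ with $\left.\d_x\lb g^{\alpha,0}-g^{\alpha,0}|_{\eps=0}\rb\right|_{u^\gamma_c\mapsto v^\gamma_c}$, where $g^{\alpha,0}=\eta^{\alpha\nu}\frac{\delta\og}{\delta u^\nu}$. This is Part~2 of Lemma~\ref{lemma:singular-3}, which is itself part of the proof of the proposition, so you cannot simply invoke it; and the ingredients you list (the two-point interpretation of $w^{\top;\alpha}$, the relation $\og_{\un,1}=(D-2)\og$, the transport of Hamiltonians) do not suffice to establish it. What is actually needed is: (i) the identification of $u^\alpha(v^*_*,\eps)$ with the string solution $u^{\str;\alpha}$ written in the coordinates given by $\ov^\top$, which follows from the fact that the DR/DZ Miura transformation maps the topological solution to the string solution; (ii) the potentials $\mcF^{\DR;\alpha}$ of \cite{BS22} with their string equation and, crucially, \cite[Theorem~4.9]{BS22}, which both gives the vanishing of the Taylor coefficients of $\mcF^{\DR;\alpha}_g$ at $t^*_*=0$ for $\sum d_i\le 2g-1$ (this is what makes the transformation purely singular) and computes the coefficients at $\sum d_i=2g$ as the DR-cycle intersection numbers defining $g^{\alpha,0}$; and (iii) Lemma~\ref{lemma:leading term of rational function} to convert these leading Taylor coefficients into the polynomial part of the rational expression. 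Without the input from \cite{BS22} the "single geometric identity" you postpone is precisely the hard content of the proof, and the remaining operator algebra (matching the Lemma~\ref{lemma:singular-2} output against the alternative formula for $K^\DR$ in Remark~\ref{remark:more on KDR}) is routine only once that identity is in hand.
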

\begin{proof}

Denote $g^{\alpha,0}:=\eta^{\alpha\nu}g_{\nu,0}$.

\medskip

\begin{lemma}{\ }\label{lemma:singular-3}
\begin{enumerate}
\item The rational Miura transformation $v^\alpha\mapsto u^\alpha(v^*_*,\eps)$ is purely singular.

\smallskip

\item We have $\left(v^1_x\left(u^\alpha(v^*_*,\eps)-v^\alpha\right)\right)^\pol=\left.\d_x\lb g^{\alpha,0}-\left.g^{\alpha,0}\right|_{\eps=0}\rb\right|_{u^\gamma_c\mapsto v^\gamma_c}$.
\end{enumerate}
\end{lemma}
\begin{proof}
Since $\og_{1,0}=\int\frac{1}{2}\eta_{\alpha\beta}u^\alpha u^\beta$ \cite[Lemma~4.3]{Bur15}, the flow $\frac{\d}{\d t^1_0}$ of the DR hierarchy is given by $\frac{\d u^\alpha}{\d t^1_0}=u^\alpha_x$. Consider the solution $(u^{\str;1},\ldots,u^{\str;N})\in\mbC[[t^*_*,\eps]]^N$ of the DR hierarchy satisfying the initial condition $u^{\str;\alpha}|_{t^{\ne 1}_0=t^*_{\ge 1}=0}=\delta^{\alpha,1}t^1_0$ (see also Remark~\ref{remark:indentifying x and t}). This solution is called the \emph{string solution}. We denote $u^{\str;\alpha}_n:=\frac{\d^n u^{\str;\alpha}}{(\d t^1_0)^n}$.

\medskip

The Miura transformation relating the Dubrovin--Zhang hierarchy and the DR hierarchy maps the topological solution of the Dubrovin--Zhang hierarchy to the string solution of the DR hierarchy \cite[proof of Proposition~7.4]{BDGR18}:
$$
\left.w^\alpha(u^*_*,\eps)\right|_{u^\gamma_c\mapsto u^{\str;\gamma}_c}=w^{\top;\alpha}.
$$
Therefore, using the isomorphism $\hcA^\wk_v\cong\mbC[[t^*_*,\eps]]$ given by the $N$-tuple $\ov^\top$, the rational Miura transformation $v^\alpha\mapsto u^\alpha(v^*_*,\eps)$ can be described as
$$
u^\alpha(v^*_*,\eps)=u^{\str;\alpha}(v^*_*,\eps).
$$
The string solution of the DR hierarchy satisfies the string equation~\cite[Lemma~4.7]{Bur15}
$$
\cS\lb u^{\str;\alpha}\rb=\delta^{\alpha,1},
$$
which implies that $\frac{\d u^\alpha(v^*_*,\eps)}{\d v^1}$, and so the second condition in~\eqref{eq:conditions for purely singular} is satisfied.

\medskip

Let us now check the first condition in~\eqref{eq:conditions for purely singular}. Following~\cite[Section~4.4.2]{BS22}, we introduce~$N$ formal power series $\mcF^{\DR;\alpha}\in\mbC[[t^*_*,\eps]]$, $\alpha\in[N]$, by the formula
$$
\frac{\d\mcF^{\DR;\alpha}}{\d t^\beta_b}:=\left.\eta^{\alpha\nu}\frac{\delta\og_{\beta,b}}{\delta u^\nu}\right|_{u^\gamma_n\mapsto u^{\str;\gamma}_n},
$$
with the constant terms defined to be equal to zero, $\left.\mcF^{\DR;\alpha}\right|_{t^*_*=0}:=0$. Clearly, we have $u^{\str;\alpha}=\frac{\d\cF^{\DR;\alpha}}{\d t^1_0}$. Consider the expansion $\mcF^{\DR;\alpha}=\sum_{g\ge 0}\eps^{2g}\mcF^{\DR;\alpha}_g$. By~\cite[Theorem~4.9]{BS22}, we have
\begin{gather}\label{eq:vanishing for FDR}
\left.\frac{\d^n\mcF_g^{\DR;\alpha}}{\d t^{\alpha_1}_{d_1}\cdots\d t^{\alpha_n}_{d_n}}\right|_{t^*_*=0}=0,\quad\text{if}\quad\sum d_i\le 2g-1.
\end{gather}

\medskip

\begin{lemma}
The formal power series $\mcF^{\DR;\alpha}$ satisfies the string equation
\begin{gather}\label{eq:string for FDRalpha}
\cS\lb\cF^{\DR;\alpha}\rb=t^\alpha_0.
\end{gather}
\end{lemma}
\begin{proof}
Clearly, both sides of~\eqref{eq:string for FDRalpha} are zero if we set $t^*_*=0$. Recall that $\cS(u^{\str;\alpha})=\delta^{\alpha,1}$ and $\frac{\d\og_{\alpha,a}}{\d u^1}=\og_{\alpha,a-1}$, $a\ge 0$. Then, for $d\ge 1$, we compute
\begin{multline*}
\frac{\d}{\d t^\beta_d}\lb\cS\lb\cF^{\DR;\alpha}\rb\rb=\cS\lb\frac{\d\cF^{\DR;\alpha}}{\d t^\beta_d}\rb-\frac{\d\cF^{\DR;\alpha}}{\d t^\beta_{d-1}}=\\
=\eta^{\alpha\nu}\left[\cS\lb\left.\frac{\delta\og_{\beta,d}}{\delta u^\nu}\right|_{u^\gamma_n\mapsto u^{\str;\gamma}_n}\rb-\left.\frac{\delta\og_{\beta,d-1}}{\delta u^\nu}\right|_{u^\gamma_n\mapsto u^{\str;\gamma}_n}\right]=\left.\eta^{\alpha\nu}\left[\frac{\delta\frac{\d\og_{\beta,d}}{\d u^1}}{\delta u^\nu}-\frac{\delta\og_{\beta,d-1}}{\delta u^\nu}\right]\right|_{u^\gamma_n\mapsto u^{\str;\gamma}_n}=0.
\end{multline*}
For $d=0$, we have
\begin{multline*}
\frac{\d}{\d t^\beta_0}\lb\cS\lb\cF^{\DR;\alpha}\rb\rb=\cS\lb\frac{\d\cF^{\DR;\alpha}}{\d t^\beta_0}\rb
=\eta^{\alpha\nu}\cS\lb\left.\frac{\delta\og_{\beta,0}}{\delta u^\nu}\right|_{u^\gamma_n\mapsto u^{\str;\gamma}_n}\rb=\left.\eta^{\alpha\nu}\frac{\delta\frac{\d\og_{\beta,0}}{\d u^1}}{\delta u^\nu}\right|_{u^\gamma_n\mapsto u^{\str;\gamma}_n}=\delta^\alpha_\beta,
\end{multline*}
which completes the proof.
\end{proof}

\medskip

Using the lemma and the vanishing~\eqref{eq:vanishing for FDR}, we obtain
$$
\Coef_{\eps^{2g}}\left.\frac{\d^n u^{\str;\alpha}}{\d t^{\alpha_1}_{d_1}\cdots\d t^{\alpha_n}_{d_n}}\right|_{t^*_*=0}=0,\quad\text{if $g\ge 1$ and $\sum d_i\le 2g$}.
$$
By Lemma~\ref{lemma:leading term of rational function}, this implies that the first condition in~\eqref{eq:conditions for purely singular} is satisfied.

\medskip

Let us now prove Part~2 of the lemma. By~\cite[Theorem~4.9]{BS22}, in the case $\sum d_i=2g$ we have
\begin{align*}
\left.\frac{\d^n\mcF_g^{\DR;\alpha}}{\d t^{\alpha_1}_{d_1}\cdots\d t^{\alpha_n}_{d_n}}\right|_{t^*_*=0}=&\eta^{\alpha\nu}\Coef_{a_1^{d_1}\cdots a_n^{d_n}}\int_{\oM_{g,n+1}}\lambda_g\DR_g(-\sum a_i,a_1,\ldots,a_n)c_{g,n+1}(e_\nu\otimes\otimes_{i=1}^n e_{\alpha_i})=\\
=&\Coef_{\eps^{2g}}\frac{\d^n g^{\alpha,0}}{\d u^{\alpha_1}_{d_1}\cdots\d u^{\alpha_n}_{d_n}}.
\end{align*}
Using the string equation for $\cF_g^{\DR;\alpha}$, we obtain
$$
\Coef_{\eps^{2g}}\left.\frac{\d^n u^{\str;\alpha}}{\d t^{\alpha_1}_{d_1}\cdots\d t^{\alpha_n}_{d_n}}\right|_{t^*_*=0}=\Coef_{\eps^{2g}}\frac{\d^n(\d_x g^{\alpha,0})}{\d u^{\alpha_1}_{d_1}\cdots\d u^{\alpha_n}_{d_n}},\quad\text{if $g\ge 1$ and $\sum d_i=2g+1$}.
$$
By Lemma~\ref{lemma:leading term of rational function}, this implies Part~2 of the lemma.
\end{proof}

\medskip

Denote $\og_{[0]}:=\og|_{\eps=0}$ and $g^{\alpha,0}_{[0]}:=g^{\alpha,0}|_{\eps=0}$. Note that $g_{\alpha,0}=\frac{\delta\og}{\delta u^\alpha}$. The operator $K=K^{\DZ;[0]}_{(2)}$ satisfies the conditions of Lemma~\ref{lemma:singular-2}, with $b^{\alpha\beta}_1=\eta^{\alpha\beta}\lb\frac{1}{2}-\mu_\beta\rb$. Denote $A^{\alpha\beta}:=\eta^{\alpha\gamma}\eta^{\beta\nu}A_{\gamma\nu}$. Using Lemma~\ref{lemma:singular-2} and Remark~\ref{remark:more on KDR}, we compute
\begin{align*}
&\lb K_{(2)}^{\DZ;[0]}\rb_u^{\alpha\beta}=\lb\lb K_{(2)}^{\DZ;[0]}\rb_u^{\alpha\beta}\rb^\pol=\\
=&\left.\lb K_{(2)}^{\DZ;[0],\alpha\beta}\rb\right|_{v^\gamma_n\mapsto u^\gamma_n}+L_\mu^1\lb\d_x(g^{\alpha,0}-g^{\alpha,0}_{[0]})\rb\circ\eta^{\mu\beta}\d_x\\
&+L_\mu\lb\d_x(g^{\alpha,0}-g^{\alpha,0}_{[0]})\rb\circ b^{\mu\beta}_1+b^{\alpha\nu}_1\circ L_\nu\lb\d_x(g^{\beta,0}-g^{\beta,0}_{[0]})\rb^\dagger=\\
=&\cancel{\d_x\circ\hOmega(\og_{[0]})^{\alpha\beta}\circ\left(\frac{1}{2}-\mu_\beta\right)}+\left(\frac{1}{2}-\mu_\alpha\right)\circ\hOmega(\og_{[0]})^{\alpha\beta}\circ\d_x+A^{\alpha\beta}\d_x+\\
&+L_\mu^1\lb\d_x (g^{\alpha,0}-g^{\alpha,0}_{[0]})\rb\circ\eta^{\mu\beta}\d_x+\d_x\circ L_\mu\lb g^{\alpha,0}-\cancel{g^{\alpha,0}_{[0]}}\rb\circ b^{\mu\beta}_1-b^{\alpha\nu}_1\circ L_\nu\lb g^{\beta,0}-g^{\beta,0}_{[0]}\rb^\dagger\circ\d_x=\\
=&\cancel{\left(\frac{1}{2}-\mu_\alpha\right)\circ\hOmega(\og_{[0]})^{\alpha\beta}\circ\d_x}+A^{\alpha\beta}\d_x+\d_x\circ L_\mu^1(g^{\alpha,0})\circ\eta^{\mu\beta}\d_x+L_\mu\lb g^{\alpha,0}-\cancel{g^{\alpha,0}_{[0]}}\rb\circ\eta^{\mu\beta}\d_x\\
&+\d_x\circ L_\nu(g^{\alpha,0})\circ \eta^{\nu\beta}\lb\frac{1}{2}-\mu_\beta\rb-\eta^{\alpha\nu}\lb\frac{1}{2}-\mu_\nu\rb\circ L_\nu\lb g^{\beta,0}-\cancel{g^{\beta,0}_{[0]}}\rb^\dagger\circ\d_x=\\
=&A^{\alpha\beta}\d_x+\d_x\circ\hOmega^1(\og)^{\alpha\beta}\circ\d_x+\hOmega(\og)^{\alpha\beta}\circ\d_x+\d_x\circ\hOmega(\og)^{\alpha\beta}\circ\lb\frac{1}{2}-\mu_\beta\rb-\lb\frac{1}{2}+\mu_\alpha\rb\circ\lb\hOmega(\og)^{\beta\alpha}\rb^\dagger\circ\d_x=\\
=&A^{\alpha\beta}\d_x+\d_x\circ\hOmega^1(\og)^{\alpha\beta}\circ\d_x+\d_x\circ\hOmega(\og)^{\alpha\beta}\circ\lb\frac{1}{2}-\mu_\beta\rb+\lb\frac{1}{2}-\mu_\alpha\rb\circ\hOmega(\og)^{\alpha\beta}\circ\d_x=\\
=&K^{\DR;\alpha\beta},
\end{align*}
which concludes the proof of the proposition.
\end{proof}

\medskip

\end{document}